\newcommand{\p}{\partial}
\newcommand{\dd}{{\rm d}}
\newcommand{\bd}{\begin{definition}}                
\newcommand{\ed}{\end{definition}}                  
\newcommand{\bc}{\begin{corollary}}                 
\newcommand{\ec}{\end{corollary}}                   
\newcommand{\bl}{\begin{lemma}}                     
\newcommand{\el}{\end{lemma}}                       
\newcommand{\bp}{\begin{proposition}}            
\newcommand{\ep}{\end{proposition}}                
\newcommand{\bere}{\begin{remark}}                  
\newcommand{\ere}{\end{remark}}                     
\newcommand{\bt}{\begin{theorem}}
\newcommand{\et}{\end{theorem}}
\newcommand{\be}{\begin{equation}}
\newcommand{\ee}{\end{equation}}
\newcommand{\bit}{\begin{itemize}}
\newcommand{\eit}{\end{itemize}}
\newtheorem{theorem}{Theorem}[section]
\newtheorem{corollary}[theorem]{Corollary}
\newtheorem{lemma}[theorem]{Lemma}
\newtheorem{proposition}[theorem]{Proposition}
\theoremstyle{definition}
\newtheorem{definition}[theorem]{Definition}
\theoremstyle{remark}
\newtheorem{remark}[theorem]{Remark}
\begin{document}

\title{Special coordinate systems in pseudo-Finsler geometry and
the equivalence principle}

\author{E. Minguzzi\thanks{
Dipartimento di Matematica e Informatica ``U. Dini'', Universit\`a
degli Studi di Firenze, Via S. Marta 3,  I-50139 Firenze, Italy.
 e-mail: ettore.minguzzi@unifi.it }}

\date{}
\maketitle

\begin{abstract}
\noindent Special coordinate systems are constructed in a neighborhood of a point or of a curve. Taylor expansions can then be easily inferred for the metric, the connection, or the Finsler Lagrangian in terms of curvature invariants. These coordinates circumvent the difficulties of the normal and Fermi coordinates in Finsler geometry, which in general are not sufficiently differentiable. They are obtained applying the usual constructions to  the pullback of a horizontally torsionless connection. The results so obtained are easily specialized to  the Berwald or Chern-Rund connections and have application in the study of the equivalence principle in Finslerian extensions of general relativity.
\end{abstract}

\section{Introduction}

Finslerian modifications of Einstein's gravity have received renewed attention quite recently \cite{bogoslovsky98,gibbons07,lammerzahl12,pfeifer12,chang13,minguzzi13c,minguzzi14c}, while the mathematical interest in Finsler geometry  never faded \cite{bejancu99,bryant02,matveev09b,sabau10,bucataru10,bucataru12}. In these theories the motion of a free falling particle is described by a geodesic, this concept  being defined through the notion of  spray \cite{shen01b}, and as such it makes no reference to other properties of the particle such as its mass or its composition. We might say that the weak equivalence principle is naturally satisfied in these theories.

Still one would like to show that any free falling observer looking at neighboring free falling particles observes them moving uniformly over straight lines, at least within some approximation. In order to accomplish
this result it is necessary to show that natural coordinates can be defined in a neighborhood of the observer, and that  free particles move indeed on straight lines according to those coordinates.

In pseudo-Riemannian geometry one uses normal coordinates in a neighborhood of  a point or Fermi(-Walker) coordinates in a neighborhood of  a curve, both being built using the exponential map.
Unfortunately, both procedures fail in pseudo-Finsler geometry unless the space is Berwald ($G^\alpha_{\ \beta \gamma \delta}=0$). In fact, in normal coordinates at $\bar x$ any geodesic passing through $\bar x$ reads $x^\mu(t)= n^\mu t$, which, recalling the geodesic equation for general sprays (the reader not familiar with the next expressions and  notations is referred to the next section for an introduction to our terminology)
\[
\ddot x^\alpha+2G^\alpha(x,\dot x)=0,
\]
gives for any $n\ne 0$,
\[
G^\alpha(n t,n )=0.
\]
Differentiating three times with respect to $n$, setting $n=v$ and letting $t\to 0$ gives $G^\alpha_{\ \beta \gamma \delta}(\bar x,v)=0$. In these mathematical steps we have tacitly assumed that the normal coordinate system is $C^5$, otherwise we could not to write the geodesic equation in this chart and  differentiate three times. In conclusion:

\begin{proposition}
If $C^5$ normal coordinates exist at a point $x$ then for every $v\in T_xM$, $G^\alpha_{\ \beta \gamma \delta}(x,v)=0$.
\end{proposition}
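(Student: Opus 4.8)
The plan is to exploit the defining feature of normal coordinates, namely that every geodesic issuing from the center $\bar x$ is a straight line through the origin. In such a chart a geodesic with initial velocity $n\in T_{\bar x}M$ reads $x^\mu(t)=n^\mu t$, so that $\dot x=n$ and $\ddot x=0$. Substituting into the spray form of the geodesic equation $\ddot x^\alpha+2G^\alpha(x,\dot x)=0$ immediately yields the identity
\[
G^\alpha(nt,n)=0,
\]
valid for every $n$ and for all $t$ in a neighborhood of $0$ (at $t=0$ this already records the vanishing of the spray at the center, $G^\alpha(\bar x,n)=0$). This identity is the single geometric input of the argument; everything else is differentiation.

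Next I would extract the Berwald curvature $G^\alpha_{\ \beta\gamma\delta}=\partial^3 G^\alpha/\partial y^\beta\partial y^\gamma\partial y^\delta$ by differentiating the identity $G^\alpha(nt,n)\equiv 0$ three times with respect to the components of $n$. Since $n$ occupies both the base slot $x=nt$ and the fibre slot $y=n$, the chain rule splits each differentiation into a term acting on the fibre argument and a term acting on the base argument; the latter always produces a factor $t$. Grouping terms, one obtains
\[
0=\frac{\partial^3}{\partial n^\beta\partial n^\gamma\partial n^\delta}\,G^\alpha(nt,n)=G^\alpha_{\ \beta\gamma\delta}(nt,n)+t\,R^\alpha_{\ \beta\gamma\delta}(n,t),
\]
where $R$ collects all contributions carrying at least one base derivative. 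Setting $n=v$ and letting $t\to 0$, the remainder $tR$ vanishes provided $R$ stays bounded, while $G^\alpha_{\ \beta\gamma\delta}(nt,n)\to G^\alpha_{\ \beta\gamma\delta}(\bar x,v)$ by continuity, so the surviving term gives $G^\alpha_{\ \beta\gamma\delta}(\bar x,v)=0$, as claimed.

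The easy part is the algebra above; the delicate point, and the reason the regularity hypothesis appears, is to certify that each of these steps is legitimate. Writing the second-order geodesic equation in the normal chart requires the chart, and hence the spray expressed in it, to be twice differentiable in the base variable, while differentiating three times in $n$ and passing to the limit $t\to 0$ requires the third fibre derivatives $G^\alpha_{\ \beta\gamma\delta}$ and the remainder $R$ to be continuous up to the center. The assumption that the normal coordinate system is of class $C^5$ is exactly what licenses both the geodesic equation and the three differentiations with continuous results, so that the limit argument is valid. I expect this regularity bookkeeping, rather than the computation, to be the main obstacle: it is precisely here that generic Finsler normal coordinates fail, being typically only $C^1$ at the center, which is why the availability of such highly differentiable coordinates forces the Berwald condition $G^\alpha_{\ \beta\gamma\delta}(\bar x,v)=0$.
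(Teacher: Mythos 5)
Your proposal is correct and follows essentially the same route as the paper: in the normal chart every geodesic through the center is $x^\mu(t)=n^\mu t$, which via the spray equation yields $G^\alpha(nt,n)=0$, and three differentiations in $n$ followed by the limit $t\to 0$ (base-derivative terms carrying factors of $t$) give $G^\alpha_{\ \beta\gamma\delta}(\bar x,v)=0$. Your explicit bookkeeping of the remainder term and of why $C^5$ regularity licenses both writing the geodesic equation in the chart and the three differentiations is exactly the tacit justification the paper invokes, so nothing is missing.
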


Actually, more refined results have been obtained \cite{busemann55} which show that the existence of $C^2$ normal coordinate systems cannot be assumed in general pseudo-Finsler  spaces.
It is therefore natural to ask whether adapted coordinate systems can be introduced which simplify the expressions of the connection coefficients, metric and Finsler Lagrangian without passing from the Finslerian exponential map (for a detailed study of the Finslerian exponential map the reader is referred to \cite{whitehead32,whitehead33,minguzzi13d}).

This problem has been studied long ago by Veblen and Thomas \cite{veblen23} and by Douglas \cite{douglas27}. It has also been reconsidered by Pfeifer \cite{pfeifer14} through an approach framed on the tangent bundle.  The main strategy was suggested to Douglas by Thomas \cite[Eq. (8.10)]{douglas27}.
Thomas argued that the
normal coordinates at $\bar x$ should also depend on a vector $\bar v\in T_{\bar x} M$. He
suggested to solve the equations
\begin{align}
\ddot x^\alpha+G^\alpha_{\beta \gamma}(x,s) \dot x^\beta \dot x^\gamma&=0, \label{jne}\\
\dot s^\alpha+G^\alpha_{\beta \gamma}(x,s) \dot x^\beta s^\gamma&=0, \label{okd}
\end{align}
over a  interval [0,1] with the initial conditions $x(0)=\bar x$, $\dot{x}(0)=y$, $s(0)=\bar v$. The previous equations would provide a map  $y \mapsto x(1)$. Then $\{y^\alpha\}$ such that $y=y^\alpha e_\alpha$, would be the normal coordinates of $x(1)$. Observe that $y \mapsto x(1)$ is the usual exponential map if the space is Berwald ($G^\alpha_{\beta \gamma}$ is independent of velocity), thus these normal coordinates are the usual normal coordinates if the spray is a connection. The fact that the map provides a local diffeomorphism follows from the implicit function theorem as for the usual exponential map.

Notice that since the contraction $G^\alpha_{\beta \gamma}(x,s) s^\gamma=N^\alpha_\beta(x,s)$, gives the non-linear connection, in modern terminology (\ref{okd}) is nothing but the equation $D_{\dot x} s=0$ stating that $s$ is horizontal over $x(t)$ while (\ref{jne}) establishes that $x(t)$ is a geodesic for the pullback connection $s^* \nabla^B$ where $\nabla^B$ is the Berwald Finsler connection and $s$ is a section from the path $x(t)$ to $TM$. Since the paths $x(t)$ obtained for different choices of $y$ cover a whole neighborhood of $\bar x$ one could construct a section $s$ over a whole neighborhood of $\bar x$ in such a way that $D s(\bar x)=0$.

\begin{remark} \label{jdd}
Actually other notable  normal coordinates analogous to Douglas-Thomas' but based on (\ref{jne})-(\ref{okd}) where $G^\alpha_{\beta \gamma}$ is replaced by  the Chern-Rund connection $\Gamma^\alpha_{\beta \gamma}$ could be constructed. Our analysis will comprise this case. In order to distinguish it from the traditional one we shall call the coordinates a la  Douglas-Thomas' $\bar v$-{\em Berwald normal coordinates} and those obtained replacing $G^\alpha_{\beta \gamma}\to \Gamma^\alpha_{\beta \gamma}$, $\bar v$-{\em Chern-Rund normal coordinates}.
\end{remark}


In this work we use an approach according to which $s$ is directly a section defined in a neighborhood of $\bar x$ (we do not use  a different section over every curve passing through $\bar x$) such that $D s=0$ at $\bar{x}$. We then build normal coordinates for $s^* \nabla$ where $\nabla$ is either the Berwald or the Chern-Rund connection or more generally, a horizontally  torsionless Finsler connection with  trivial vertical coefficients. Whenever $s$ is the section determined by the local construction of Douglas and Thomas outlined in the previous paragraphs, and $\nabla$ is the Berwald Finsler connection  we recover the normal coordinates constructed by these authors. Whenever instead $\nabla$ is the Chern-Rund connection we recover the Chern-Rund normal coordinates of Remark \ref{jdd}.

However, we shall not work with such a rigid choice of $s$, in fact the lowest order terms in the Taylor expansion of the metric will turn out to be independent on how we select ${s}$ as long as it satisfies $s(\bar x)=\bar v$ and $D s(\bar x)=0$.

Thus our local coordinate system is indeed on $M$, but the exponential map we use to build it is well defined and sufficiently differentiable since it is the exponential map of a usual (non-metric) torsionless connection  $s^* \nabla$. The interesting fact is that  this coordinate system  will retain much of the classical properties  of the normal coordinate system for what concerns its ability to simplify the expression of the connection coefficients (Prop.\ \ref{wfy}-\ref{dib}). Moreover, as mentioned, the derivatives of the metric at $\bar x$ and hence its Taylor expansion, at least for what concerns the first terms which we have calculated, turn out to be independent of the chosen section, thus the derivatives at $\bar x$ which we obtain are the same that would have been obtained using the Douglas-Thomas construction.
We notice that the expansion of the metric was not determined in these early investigations.

Through the use of this coordinate system we will be able to clarify the notion of local observer in Finsler gravity. In fact, we will obtain a general formula which expresses the apparent forces in the comoving frame (Eq.\ (\ref{qki})).

\section{Elements of pseudo-Finsler geometry}
The purpose of this section is mainly that of fixing  notation and terminology. As in Finsler geometry there are many different notations we shall give some key coordinate expression which might allow the reader to make fast correspondences with notations he might be used to. Of course the objects introduced below can be given coordinate-free formulations, for those the reader is referred to \cite{antonelli93,abate94,bao00,shen01,szilasi14,minguzzi14c}.

Let $M$ be a paracompact, Hausdorff, connected, $n+1$-dimensional manifold. Let $\{x^\mu\}$ denote a local chart on $M$ and let $\{ x^\mu,v^\nu\}$ be the induced local chart on $TM$.
The Finsler Lagrangian is a  function on the slit tangent bundle $\mathscr{L}\colon TM\backslash 0 \to \mathbb{R}$ positive homogeneous of degree two in the velocities, $\mathscr{L}(x,sv)=s^2 \mathscr{L}(x,v)$ for every $s>0$.  The metric is defined as the Hessian  of $\mathscr{L}$ with respect to the velocities
\[
g_{\mu \nu}(x,v)= \frac{\p^2 \mathscr{L}}{\p v^\mu \p v^\nu},
\]
and in index free notation will be also denoted with $g_v$ to stress the dependence on the velocity. This Finsler metric provides a map $g\colon TM\backslash 0 \to  T^*M \otimes T^*M$. The tensor $C_{\alpha \beta \gamma}(x,v)=\frac{1}{2}\,\frac{\p}{\p v^\gamma}g_{\alpha \beta}(x,v)$ is called Cartan torsion.

Lorentz-Finsler geometry is obtained whenever $g_v$ is Lorentzian, namely of signature $(-,+,\cdots,+)$.
The definition of Lorentz-Finsler manifold can be found in \cite{beem70}. We note that it is particularly convenient to work with a Lagrangian defined on the slit bundle $TM\backslash 0$ since the theory of Finsler connections traditionally has been developed on this space. For what concerns applications to Finsler gravity we shall tacitly assume  that the signature is Lorentzian, but for the other results the signature could be arbitrary.

Let us recall some elements on the geometry of pseudo-Finsler connections (the reader is referred to \cite{minguzzi14c}). The Finsler Lagrangian allows us to define the geodesics as the stationary points  of the functional $\int \mathscr{L}(x,\dot x)\dd t$. The Lagrange equations are of second order and it turns out that a good starting point for the introduction of the Finsler connections is the notion of {\em spray}.

We recall that a spray over $M$ can be characterized locally as a second order equation
\[
\ddot x^\alpha+2G^\alpha(x,\dot x)=0,
\]
where $G^\alpha$ is positive homogeneous of degree two $G(x,s v)=s^2 G(x,v)$ for every $s>0$. Let $E=TM\backslash 0$, and let $\pi_M\colon E\to M$ be the usual projection. This projection determines a vertical space $V_e E$ at every point $e\in E$. A non-linear connection is a splitting of the tangent space $TE=VE\otimes HE$ into vertical and horizontal bundles. A base for the horizontal space is given by
\[
\{\frac{\delta}{\delta x^\mu}\}, \qquad \frac{\delta}{\delta x^\mu}=\frac{\p}{\p x^\mu}-N^\nu_\mu(x,v) \frac{\p}{\p v^\nu},
\]
where the coefficients $N^\nu_\mu(x,v)$ define the non-linear connection and have suitable transformation properties under change of coordinates.
The curvature of the non-linear connection measures the non-holonomicity of the horizontal distribution
\[
\left[ \frac{\delta}{\delta x^\alpha}, \frac{\delta}{\delta x^\beta}\right]= -R^\mu_{\alpha \beta} \frac{\p}{\p v^\mu}, \qquad R^\mu_{\alpha \beta}(x,v)=\frac{\delta N^\mu_\beta}{\delta x^\alpha}-\frac{\delta N^\mu_\alpha}{\delta x^\beta}.
\]
Given a section $s\colon U\to E$, $U\subset M$, we can define a covariant derivative for the non-linear connection as
\[
 D_{\xi} s^\alpha=(\frac{\p s^\alpha}{\p x^\mu} +N^\alpha_\mu(x, s(x)) )\xi^\mu .
\]
The flipped derivative is instead
\[
{\tilde{D}_{\xi}} s^\alpha=\frac{\p s^\alpha}{\p x^\mu} \, \xi^\mu+N^\alpha_\mu(x, \xi) s^\mu .
\]
and although well defined {\em is not} a covariant derivative in the standard sense since it is non-linear in the derivative vector $\xi$. As a consequence, we cannot speak of curvature of the flipped derivative.\footnote{
Some authors call it covariant derivative \cite{shen01}, but in our opinion this term should be reserved to $D$.}
 Observe that if $X,Y\colon M\to TM$ are vector fields then
 \[
 {\tilde D_{X}} Y- D_Y X=[X,Y].
  \]
A geodesic is a curve $x(t)$ which satisfies $D_{\dot x} \dot{x}=0$ (note that it can also be written ${\tilde D_{\dot x}} \dot x=0$).
We shall only be interested in the non-linear connection determined by a spray as follows
\begin{equation} \label{dis}
N^\mu_\alpha=G^\mu_\alpha:=\p G^\mu/\p v^\alpha.
\end{equation}
 The geodesics of this non-linear connection coincide with the integral curves of the spray.

The spray comes from a Lagrangian if the geodesics of the spray are  the stationary points of the action functional $\int \mathscr{L}\dd t$, that is
\begin{align}
2 {G}^\alpha(x,v)&= g^{\alpha\delta}\left( \frac{\p^2\mathscr{L} }{\p x^\gamma \p v^\delta} \,v^\gamma -\frac{\p\mathscr{L} }{\p x^\delta } \right) \label{axo}\\
&=\frac{1}{2}\, g^{\alpha \delta} \left( \frac{\p}{\p x^\beta} \,g_{\delta \gamma}+\frac{\p}{\p x^\gamma} \, g_{\delta \beta}-\frac{\p}{\p x^\delta}\, g_{\beta \gamma}\right) v^\beta v^\gamma . \label{axu}
\end{align}
Some of the results which we shall obtain will be independent of the compatibility of the spray with a Finsler Lagrangian (observe that the Douglas-Thomas' normal coordinates construction does not make use of this structure).

In Finsler geometry one can further define the linear Finsler connection $\nabla$, namely splittings of the vertical bundle $\pi_E\colon VE\to E$, $E=TM\backslash 0$. The  Berwald, Cartan, Chern-Rund and Hashiguchi connections are of this type. They are referred as {\em notable} Finsler connections.  Although different, they are all compatible with the same  non-linear connection. In fact, the covariant derivative $X\to \nabla_X L$ of the Liouville vector field $L\colon E\to VE$, $L=v^\alpha\p/\p v^\alpha$, vanishes precisely over a n+1-dimensional distribution which determines a non-linear connection.  For all the notable connections this distribution is always the same and is determined by the spray as in (\ref{dis}).

Each Finsler connection $\nabla$ determines two covariant derivatives $\nabla^H$ and $\nabla^V$ respectively being obtained from $\nabla_{\check X}$ whenever $\check X$ is the horizontal (resp.\ vertical) lift of a vector $X\in TM$. In particular $\nabla^H$ is determined by local connection coefficients $H^\alpha_{\mu \nu}(x,v)$ which are related to those of the non-linear connection by $N^\alpha_\mu(x,v)=H^\alpha_{\mu \nu}(x,v)v^\nu$. We shall distinguish between the Berwald horizontal derivative $\nabla^{HB}$ and the Chern-Rund or Cartan horizontal derivative, denoted $\nabla^{HC}$.
The horizontal coefficients of the Berwald connection read
\[
H^\alpha_{\mu \nu}:=G^\alpha_{\mu \nu}:=\frac{\p}{\p v^\nu}\,G^\alpha_\mu.
\]
The further derivative $G^\alpha_{\mu \nu \beta}$ defines the Berwald curvature.
As for the Chern-Rund or Cartan connection, the condition $\nabla^{HC} g=0$  gives
\[
H^\alpha_{\beta \gamma}:=\Gamma_{\beta \gamma}^{\alpha}:=\frac{1}{2} g^{\alpha \sigma} \left( \frac{\delta}{\delta x^\beta} \,g_{\sigma \gamma}+\frac{\delta}{\delta x^\gamma} \, g_{\sigma \beta}-\frac{\delta}{\delta x^\sigma}\, g_{\beta \gamma}\right).
\]
The difference
\begin{equation} \label{lan}
L_{\beta \gamma}^{\alpha}=G_{\beta \gamma}^{\alpha}-\Gamma_{\beta \gamma}^{\alpha}
\end{equation}
is the Landsberg (Finsler) tensor. The tensor $L_{\alpha \beta \gamma}(x,v)=g_{\alpha \mu}(x,v) L^{\mu}_{\beta \gamma}(x,v)$ is  symmetric and  $L_{\alpha \beta \gamma}(x,v) v^\gamma=0$.

A property of the flipped derivative which is a consequence of the horizontal compatibility of the  Chern-Rund or Cartan connections with the metric is
\[
\tilde{D}_u g_u(X,Y)= g_u(\tilde{D}_u X,Y)+g_u(X,\tilde{D}_u Y),
\]
for every vector $u\in T_p M\backslash 0$  and fields $X,Y\colon M\to  TM$. The linearity of the map $X\mapsto \tilde D_u X$ implies that $\tilde{D}_u$ can be extended to one-forms and hence tensors in the usual way. Thus the previous identity is simply the statement
\begin{equation}
\tilde D_u g_u=0.
\end{equation}

The horizontal-horizontal curvature\footnote{I prefer to denote the curvatures of the Finsler connection with $R^{HH}$, $R^{VH}$ and $R^{VV}$, in place of $R$, $P$ and $S$, as done by some authors. Indeed, I wish to make the notation less ambiguous since the letter $R$ is also used for the curvature of the non-linear connection.} $R^{HH}$ of any Finsler connection is related to the curvature of the non-linear connections as follows (see e.g.\ \cite[Eq.\ (67)]{minguzzi14c})
\[
R^{HH} {}^\alpha_{\ \beta \mu \nu}(x,v) v^\beta=R^\alpha_{\mu \nu}(x,v) .
\]
If the non-linear connection comes from a Finsler Lagrangian then we set
 $R_{\alpha \beta \gamma}=g_{\alpha \mu} R^\mu_{\beta \gamma}$ and $R^\alpha_{\beta}=R^\alpha_{\beta \gamma} v^\gamma$. In this case we have $R_{[\alpha \beta \gamma]}=0$, for a proof see for instance \cite[Eq.\ (73) and (87)]{minguzzi14c}.

\section{The equivalence principle}

The weak equivalence principle is the statement according to which the trajectory of a body on a gravitational field depends only on its initial position and velocity, and is independent of its composition and structure. In mathematical terms it states that free fall is represented by  geodesics where these paths are defined through a spray.

Sometimes, one can find a statement according to which the weak equivalence principle implies that any observer in free fall looking at neighboring test particles would observe them move uniformly over straight lines (at least up to higher order terms linear in position (tidal forces) or quadratic in the velocities).

Let us show that this is indeed the case. Under a change of coordinates $\tilde{x}(x)$ the connection coefficients of a horizontal connection transform as
\[
\tilde H^\alpha_{\beta \gamma}(x,v) =\frac{\p \tilde x^\alpha}{\p x^\sigma } \frac{\p x^\delta}{\p \tilde x^\beta} \frac{\p x^\mu}{\p \tilde x^\gamma} \,H^\sigma_{\delta \mu}(x,v) +\frac{\p^2  x^\sigma}{\p \tilde x^\beta  \p  \tilde x^\gamma }  \frac{\p \tilde x^\alpha}{\p  x^\sigma}  .
\]
Thus we have the following little trick which has been used, for instance in \cite{minguzzi14c}, to infer indentities and check  long calculations.

\begin{proposition} \label{poi}
For any given $(\bar x,\bar v)\in E=TM\backslash 0$, and any chosen Finsler connection, it is always possible to find local coordinates near $\bar x$ such that $\p_0(\bar x)=\bar v$ and
\[
H^\alpha_{\beta \gamma}(\bar x,\bar v)=G^\alpha_\beta(\bar x,\bar v)=G^\alpha(\bar x,\bar v)=\frac{\p \mathscr{L}}{\p x^\alpha}(\bar x,\bar v)=0.
\]
\end{proposition}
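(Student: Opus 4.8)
The plan is to transcribe the classical construction of geodesic (Riemann) normal coordinates, the essential observation being that the inhomogeneous term $\frac{\p^2 x^\sigma}{\p\tilde x^\beta\p\tilde x^\gamma}\frac{\p\tilde x^\alpha}{\p x^\sigma}$ in the transformation law displayed above does not involve $v$ and has exactly the affine shape. Since $(\bar x,\bar v)\in TM\backslash 0$ forces $\bar v\neq 0$, I would first choose coordinates centred at $\bar x$ by a linear change adapted so that $\bar v^\mu=\delta^\mu_0$, that is $\p_0(\bar x)=\bar v$; this fixes the zeroth-order frame once and for all.

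Reading off the numbers $H^\alpha_{\beta\gamma}(\bar x,\bar v)$ in this chart, which for the notable (horizontally torsionless) connections are symmetric in $\beta,\gamma$, I would then pass to
\[
\tilde x^\alpha=(x^\alpha-\bar x^\alpha)+\tfrac12\,H^\alpha_{\beta\gamma}(\bar x,\bar v)\,(x^\beta-\bar x^\beta)(x^\gamma-\bar x^\gamma).
\]
This is a local diffeomorphism near $\bar x$ because its Jacobian there is the identity; in particular the first-order frame is untouched, so $\p_0(\bar x)=\bar v$ persists and the components of $\bar v$ are numerically the same in both charts. At $\bar x$ one has $\p\tilde x^\alpha/\p x^\sigma=\delta^\alpha_\sigma$ and $\p^2\tilde x^\alpha/\p x^\beta\p x^\gamma=H^\alpha_{\beta\gamma}(\bar x,\bar v)$, while the standard second-derivative relation for an inverse map, valid where the Jacobian is the identity, gives $\p^2 x^\sigma/\p\tilde x^\beta\p\tilde x^\gamma=-H^\sigma_{\beta\gamma}(\bar x,\bar v)$. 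Substituting into the transformation law at $(\bar x,\bar v)$, where the two velocity arguments coincide numerically, the homogeneous and inhomogeneous terms cancel and $\tilde H^\alpha_{\beta\gamma}(\bar x,\bar v)=0$.

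The remaining three vanishings then follow for free from the structural identities of the previous section. Since every notable connection is compatible with the spray's nonlinear connection, $N^\alpha_\mu=H^\alpha_{\mu\nu}v^\nu$, so $G^\alpha_\beta(\bar x,\bar v)=N^\alpha_\beta(\bar x,\bar v)=H^\alpha_{\beta\gamma}(\bar x,\bar v)\bar v^\gamma=0$, and Euler's theorem for the degree-two homogeneous $G^\alpha$ gives $G^\alpha(\bar x,\bar v)=\tfrac12 G^\alpha_\beta(\bar x,\bar v)\bar v^\beta=0$. For the last equality I would use the horizontal constancy of the Lagrangian, $\delta\mathscr{L}/\delta x^\alpha=0$. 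If this is not taken as known I would derive it in one line: with $\mathscr{L}=\tfrac12 g_{\mu\nu}v^\mu v^\nu$ and $\p\mathscr{L}/\p v^\nu=g_{\nu\mu}v^\mu$, contracting the Chern--Rund compatibility $\nabla^{HC}g=0$ twice with $v$ yields $\tfrac12\,\p_\rho g_{\mu\nu}v^\mu v^\nu=N^\sigma_\rho g_{\sigma\nu}v^\nu$ (the Cartan-torsion terms drop because $C_{\mu\nu\sigma}v^\mu=0$), which is precisely $\delta\mathscr{L}/\delta x^\rho=\p\mathscr{L}/\p x^\rho-N^\nu_\rho\,\p\mathscr{L}/\p v^\nu=0$. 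As $N^\nu_\alpha(\bar x,\bar v)=0$ in the constructed chart, this delivers $\p\mathscr{L}/\p x^\alpha(\bar x,\bar v)=0$.

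I expect no serious obstacle, since this is a direct transcription of the Riemannian argument. The two points that deserve care are that the transformation law can annihilate only the symmetric part $H^\alpha_{(\beta\gamma)}$, which is why one restricts to horizontally torsionless connections, and that the velocity argument must be tracked correctly; the latter is in fact what makes the argument painless, because the inhomogeneous term is $v$-free and, at $\bar x$, the identity Jacobian leaves $\bar v$ and hence the evaluation point $(\bar x,\bar v)$ fixed.
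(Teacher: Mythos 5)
Your proof is correct and takes essentially the same route as the paper's: a quadratic change of coordinates with identity Jacobian at $\bar x$ (so that the frame, and hence $\p_0(\bar x)=\bar v$ and the components of $\bar v$, are untouched) removes $H^\alpha_{\beta\gamma}(\bar x,\bar v)$, after which the remaining vanishings follow from $N^\alpha_\beta=H^\alpha_{\beta\gamma}v^\gamma$, Euler homogeneity of $G^\alpha$, and the identity $\delta\mathscr{L}/\delta x^\alpha=0$ (which the paper cites rather than derives, whereas you supply the one-line proof from $\nabla^{HC}g=0$). Two of your side remarks are in fact small improvements on the paper's write-up: the factor $\tfrac12$ in the quadratic term is the correct normalization (the paper's displayed change $\tilde x^\alpha-H^\alpha_{\beta\gamma}(\bar x,\bar v)\tilde x^\beta\tilde x^\gamma=x^\alpha$ omits it), and your observation that only the symmetric part $H^\alpha_{(\beta\gamma)}$ can be transformed away correctly identifies the tacit restriction, behind the statement's ``any chosen Finsler connection,'' to the notable, horizontally torsionless connections.
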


Observe that one can either choose to get $G^\alpha_{\beta \gamma}(\bar x,\bar v)=0$ choosing the Berwald Finsler connection or $\Gamma^\alpha_{\beta \gamma}(\bar x,\bar v)=0$ choosing the Cartan Finsler connection.

\begin{proof}
Suppose that $H^\alpha_{\beta \gamma}(\bar x,\bar v)\ne 0$, in a system of coordinates for which $x^\nu(\bar{x})=0$.  The change of coordinates such that $\tilde{x}^\alpha- H^\alpha_{\beta \gamma}(\bar x,\bar v) \tilde x^\beta \tilde x^\gamma=x^\alpha$ is clearly locally invertible since $\p x^\alpha/\p \tilde x^\beta(\bar{x})=\delta^\alpha_\beta$, and accomplishes the first equation. The second equation follows from $G^\alpha_\beta=H^\alpha_{\beta \gamma} v^\gamma$, the third from (positive homogeneity) $G^\alpha=2 G^\alpha_\beta v^\beta$, and the latter from the fact that $\frac{\delta \mathscr{L}}{\delta x^\gamma}=0$, see e.g.\ \cite[Prop.\ 3.6,4.5]{minguzzi14c}. The equation $\p_0(\bar x)=\bar v$ is accomplished with a last linear change of coordinates.
\end{proof}
Observe that in order to check a tensorial equation on $E$ it is sufficient to check it in the special reference frame given by the previous proposition. By covariance it will then hold in any coordinate system (recall that the coordinate system on $TM$ is induced from that on $M$). This trick provides a drastic help in calculations.

The previous proposition already implies that for every $(\bar{x}, \bar{v})$, representing the motion of an observer at a certain event, there is a coordinate system for which $G^\alpha_{\beta \gamma}(\bar x,\bar v)=0$. Thus a geodesic $x\mapsto x(t)$  passing at $\check x$ with velocity $\check v$, satisfies for $(\check x,\check v)$ near $(\bar{x}, \bar{v})$
\[
\frac{\dd^2 x^ \alpha}{\dd t^2}=(\vert \check v-\bar v\vert^2+ \vert \check x-\bar x\vert) O(1),
\]
which means that the point particle approximately moves on a straight line (almost zero coordinate acceleration). However, there is an important difference with respect to the result for Lorentzian geometry. In Lorentzian geometry for $ \check x=\bar x$ we have the exact identity $\frac{\dd^2 x^ \alpha}{\dd t^2}=0$ at the point under consideration irrespective of the velocity, while in Finslerian theories this is not true, since $G^\alpha_{\beta \gamma}(\bar x, v)$ vanishes at $v=\bar{v}$ but depends on velocity. Thus:
\begin{quote}
{\em The observed trajectories passing through a chosen event ${\bar x}$ are `straight lines' (no coordinate acceleration) only approximately and only for particles moving slowly with respect to the observer}.
\end{quote}
We remark that observationally the difference with respect to Lorentzian theories cannot be appreciated, at least without some very fine measurements, indeed the faster the velocity with respect to the observer, the shorter the time that the particle will stay in a neighborhood of the observer.

We realize that the  experience according to which, free particles
move in straight lines always refers to particles which are slow with respect to the observer. Lorentz-Finsler geometry has therefore helped us to disclose a phenomenological extrapolation (particles locally move on straight lines) which does not really correspond to experience (without the slowness condition) and hence has helped us to ascertain that some mathematical restrictions can indeed be dropped.

We are going to improve the previous result. We need the notion of pullback connection. Given a local section $s\colon U\to TM\backslash 0$, $U\subset M$, we consider the pullback connection $\overset{s}{\nabla}:=s^*\nabla$ where $\nabla$ is the Finsler connection. This is an ordinary linear connection which has been studied extensively by Ingarden and Matsumoto \cite{ingarden93}, see also \cite[Sect.\ 4.1.1]{minguzzi14c}. The connection $s^*\nabla$ has connection coefficients \cite[Eq.\ (3.7)]{ingarden93} \cite[Sect.\ 4.1.1]{minguzzi14c}
\begin{equation}
\overset{s}{H}{}^\alpha_{\beta \gamma}(x)=H^\alpha_{\beta \gamma}\big(x,s(x)\big)+\left[\frac{\p s^\mu}{\p x^\gamma}+N^\mu_{\gamma}\big(x,s(x)\big)\right] V^\alpha_{\beta \mu}\big(x,s(x)\big)
\end{equation}
where $V^\alpha_{\beta \gamma}$ are the vertical connection coefficients. For the Berwald or Chern-Rund connections they vanish thus
\begin{equation} \label{jjd}
\overset{s}{H}{}^\alpha_{\beta \gamma}(x)=H^\alpha_{\beta \gamma}\big(x,s(x)\big).
\end{equation}
Ingarden and Matsumoto have calculated the torsion and curvature of the pullback connection. In the Berwald or Chern-Rund cases the torsion vanishes while the curvature is\footnote{This equation is easily obtained from the definition of  curvature of the linear connection \cite[Eq.\ (65)]{minguzzi14c}:  $R^\nabla(\check X,\check Y) \tilde{Z}= \nabla_{\check X}  \nabla_{\check Y} \tilde{Z}- \nabla_{\check Y}  \nabla_{\check X}\tilde{Z}-\nabla_{[\check X,\check Y]} \tilde{Z}$, using (notations of that work) $\overset{s}{R}=s^* R^\nabla$, and  $s_* X=D_Xs+\mathcal{N}(X)$ and analogously for $Y$. } \cite[Eq.\ (3.11)]{ingarden93}
\begin{equation} \label{dpo}
\overset{s}{R}{}^\alpha_{\ \beta \gamma \delta}=[R^{HH} {}^\alpha_{\ \beta \gamma \delta}+R^{VH} {}^\alpha_{\ \beta \mu \delta} D_\gamma s^\mu-R^{VH} {}^\alpha_{\ \beta  \mu \gamma} D_\delta s^\mu ] \vert_{v=s(x)} .
\end{equation}

It will turn out that it is particularly convenient to  use normal coordinates for a suitably chosen pullback connection. Indeed,
with these preliminaries we can improve  Prop.\ (\ref{poi})  as follows\footnote{Equation  (\ref{lagb}) has been previously obtained in \cite{pfeifer14}.}
\begin{proposition} \label{wfy}
Let $(\bar x,\bar v)\in E=TM\backslash 0$, and let us consider the Berwald or the Chern-Rund  Finsler connection. It is always possible to find local coordinates in a neighborhood of $\bar x$, for instance the $\bar v$-Berwald or the $\bar v$-Chern-Rund normal coordinates,
 such that $\p_0(\bar x)=\bar{v}$, $H^\alpha_{\beta \gamma}(\bar x,\bar v)=0$ and
\begin{align}
H^\alpha_{\beta \gamma,\delta}(\bar x,\bar v)&=-\frac{1}{3}\, \left(R^{HH} {}^\alpha_{\ \beta \gamma \delta}(\bar x,\bar v) + R^{HH} {}^\alpha_{ \ \gamma \beta \delta}(\bar x,\bar v)\right). \label{dpa}
\end{align}
If we are considering the Berwald connection, $H^\alpha_{\beta \gamma}=G^\alpha_{\beta \gamma}$, we can also conclude that in the new coordinates (e.g.\ in the $\bar v$-Berwald normal coordinates)
\begin{align}
g_{\alpha \beta,\gamma}(\bar x,\bar v)&=-2L_{\alpha \beta \gamma}(\bar x,\bar v), \qquad (\frac{\p \mathscr{L}}{\p v^\alpha})_{,\gamma}(\bar x,\bar v)= \mathscr{L}_{,\gamma}(\bar x,\bar v)=0 , \label{kkkb}\\
g_{\alpha \beta,\gamma,\delta}(\bar x,\bar v)&=-\frac{1}{6}\big(R_{\beta \gamma \alpha \delta}+R_{\alpha \gamma \beta \delta}+2R^\mu_{\ \gamma \nu \delta} v^\nu C_{\alpha \beta \mu}+6\nabla_\delta^{HB} L_{\alpha \beta \gamma} +\gamma/\delta\big)\vert_{(\bar{x},\bar{v})}, \label{metb}\\
(\frac{\p \mathscr{L}}{\p v^\alpha})_{,\gamma,\delta}(\bar x,\bar v)&=\frac{1}{3}\big( R_{\gamma \alpha \delta}\!+\! R_{\delta \alpha \gamma}\!-\!v^\beta \nabla^{HB}_\beta L_{\alpha \gamma \delta}+\!R^\mu_{\alpha} C_{\mu \gamma \delta}\!-\!R^\mu_{\gamma} C_{\mu \delta \alpha}\!-\!R^\mu_{\delta} C_{\mu \gamma \alpha} \big)\vert_{(\bar{x},\bar{v})} , \label{labb}\\
\mathscr{L}_{,\gamma,\delta}(\bar x,\bar v)&=\frac{1}{3}\,R_{\gamma \delta \alpha} \, v^\alpha \vert_{(\bar{x},\bar{v})}. \label{lagb}
\end{align}
where  $R_{\alpha \beta \gamma \delta}$ is the Berwald $HH$-curvature and
$\gamma/\delta$ means ``plus terms with  $\gamma$ and $\delta$ exchanged''.

If we are considering the Chern-Rund connection, $H^\alpha_{\beta \gamma}=\Gamma^\alpha_{\beta \gamma}$, we can also conclude that in the new coordinates (e.g.\ in the $\bar v$-Chern-Rund normal coordinates)
\begin{align}
g_{\alpha \beta,\gamma}(\bar x,\bar v)&=(\frac{\p \mathscr{L}}{\p v^\alpha})_{,\gamma}(\bar x,\bar v)= \mathscr{L}_{,\gamma}(\bar x,\bar v)=0 , \label{kkk}\\
g_{\alpha \beta,\gamma,\delta}(\bar x,\bar v)&=-\frac{1}{6}\big(R_{\beta \gamma \alpha \delta}+R_{\alpha \gamma \beta \delta}+2R^\mu_{\ \gamma \nu \delta} v^\nu C_{\alpha \beta \mu} +\gamma/\delta \big)\vert_{(\bar{x},\bar{v})}. \label{met}
\\
(\frac{\p \mathscr{L}}{\p v^\alpha})_{,\gamma,\delta}(\bar x,\bar v)&=\frac{1}{3}\big( R_{\gamma \alpha \delta}\!+\! R_{\delta \alpha \gamma}\!+\!R^\mu_{\alpha} C_{\mu \gamma \delta}\!-\!R^\mu_{\gamma} C_{\mu \delta \alpha}\!-\!R^\mu_{\delta} C_{\mu \gamma \alpha} \big)\vert_{(\bar{x},\bar{v})} , \label{lab}\\
\mathscr{L}_{,\gamma,\delta}(\bar x,\bar v)&=\frac{1}{3}\,R_{\gamma \delta \alpha} \, v^\alpha \vert_{(\bar{x},\bar{v})}. \label{lag}
\end{align}
where $R_{\alpha \beta \gamma \delta}$ is the Chern-Rund $HH$-curvature.
\end{proposition}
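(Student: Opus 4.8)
\emph{Proof strategy.} The plan is to transfer the three classical normal-coordinate facts for an ordinary torsionless linear connection into the Finsler setting through the pullback connection $\overset{s}{\nabla}=s^*\nabla$. First I would fix a section $s$ with $s(\bar x)=\bar v$ and $Ds(\bar x)=0$, form the pullback connection (Berwald or Chern-Rund), which is torsionless in both cases, and build \emph{its} normal coordinates, arranging by a final linear change that $\p_0(\bar x)=\bar v$. Since by \eqref{jjd} the pullback coefficients are $\overset{s}{H}{}^\alpha_{\beta \gamma}(x)=H^\alpha_{\beta \gamma}(x,s(x))$ and $s(\bar x)=\bar v$, the vanishing of the connection at the centre of normal coordinates gives $H^\alpha_{\beta \gamma}(\bar x,\bar v)=0$, whence $N^\alpha_\mu(\bar x,\bar v)=H^\alpha_{\mu \nu}(\bar x,\bar v)\bar v^\nu=0$ and, using $Ds(\bar x)=0$, also $\p_\mu s^\alpha(\bar x)=0$. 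These two vanishings are the linchpin: they collapse the chain rule so that $\p_\delta\overset{s}{H}{}^\alpha_{\beta \gamma}(\bar x)=H^\alpha_{\beta \gamma,\delta}(\bar x,\bar v)$ (pure $x$-derivative at fixed $v$), and they reduce the Ingarden--Matsumoto curvature \eqref{dpo} to $\overset{s}{R}{}^\alpha_{\ \beta \gamma \delta}(\bar x)=R^{HH}{}^\alpha_{\ \beta \gamma \delta}(\bar x,\bar v)$. The classical identity $\overset{s}{H}{}^\alpha_{\beta \gamma,\delta}(\bar x)=-\tfrac13(\overset{s}{R}{}^\alpha_{\ \beta \gamma \delta}+\overset{s}{R}{}^\alpha_{\ \gamma \beta \delta})(\bar x)$ -- obtained from the total symmetrization $\overset{s}{H}{}^\alpha_{(\beta \gamma,\delta)}(\bar x)=0$ together with torsionlessness -- then yields \eqref{dpa} verbatim.

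For the first-order data I would use the horizontal metric compatibility of the Chern-Rund/Cartan connection, rewritten with ordinary derivatives as $\p_\gamma g_{\alpha \beta}=g_{\mu \beta}\Gamma^\mu_{\alpha \gamma}+g_{\alpha \mu}\Gamma^\mu_{\beta \gamma}+2N^\mu_\gamma C_{\alpha \beta \mu}$. Evaluating at $(\bar x,\bar v)$: in Chern-Rund coordinates $\Gamma(\bar x,\bar v)=N(\bar x,\bar v)=0$ gives $g_{\alpha \beta,\gamma}=0$, whereas in Berwald coordinates $G(\bar x,\bar v)=0$ forces $\Gamma=G-L=-L$, giving $g_{\alpha \beta,\gamma}=-2L_{\alpha \beta \gamma}$ by the symmetry of $L_{\alpha \beta \gamma}$. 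The momentum and Lagrangian statements in \eqref{kkkb} and \eqref{kkk} then follow from $\p\mathscr L/\p v^\alpha=g_{\alpha \nu}v^\nu$ and $\mathscr L=\tfrac12 g_{\mu \nu}v^\mu v^\nu$, together with $L_{\alpha \beta \gamma}v^\gamma=0$ and the identity $\delta\mathscr L/\delta x^\gamma=0$ already invoked in Prop.\ \ref{poi}.

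The genuinely laborious part is the second order. I would differentiate the compatibility identity once more in $x$ at fixed $v$, evaluate at $(\bar x,\bar v)$ using the first-order values of $g$, the vanishing of $N$, and the normal-coordinate derivatives $\p_\delta\Gamma^\mu_{\alpha \gamma}=-\tfrac13(R^{HH}{}^\mu_{\ \alpha \gamma \delta}+R^{HH}{}^\mu_{\ \gamma \alpha \delta})$ together with $\p_\delta N^\mu_\gamma=(\p_\delta\Gamma^\mu_{\gamma \nu})v^\nu$, contracted through $R^{HH}{}^\mu_{\ \nu \gamma \delta}v^\nu=R^\mu_{\gamma \delta}$. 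Symmetrizing in $\gamma\leftrightarrow\delta$ and repeatedly invoking the antisymmetry of $R^{HH}$ in its last pair and the first-Bianchi symmetry $R_{[\alpha \beta \gamma]}=0$ should collapse the many terms into $-\tfrac16(R_{\beta \gamma \alpha \delta}+R_{\alpha \gamma \beta \delta}+2R^\mu_{\ \gamma \nu \delta}v^\nu C_{\alpha \beta \mu}+\gamma/\delta)$, giving \eqref{met}. For the momentum \eqref{lab} I would contract this with $v^\beta$, using $R^{HH}v=R$ to convert curvatures into the three-index $R_{\gamma \alpha \delta}$ and the $R^\mu_\alpha C$ terms; for the Lagrangian \eqref{lag} I contract with $\tfrac12 v^\alpha v^\beta$, where the total symmetry of $C$ with $C_{\alpha \beta \mu}v^\alpha=0$ annihilates every Cartan-torsion term and leaves the clean $\tfrac13 R_{\gamma \delta \alpha}v^\alpha$. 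The Berwald versions \eqref{metb} and \eqref{labb} differ only through $\Gamma=G-L$: now $\p_\delta\Gamma=\p_\delta G-\p_\delta L$, and at $(\bar x,\bar v)$, where $G=N=0$, the horizontal derivative degenerates to $\nabla^{HB}_\delta L_{\alpha \beta \gamma}=\p_\delta L_{\alpha \beta \gamma}$, producing the extra $6\nabla^{HB}_\delta L_{\alpha \beta \gamma}$ term.

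The main obstacle I anticipate is precisely this second-order bookkeeping: controlling the proliferation of terms from differentiating the products $\Gamma\Gamma$ and $NC$ (and, in the Berwald case, the quadratic-Landsberg pieces coming from $(\p_\delta g)\,\Gamma$ and from $g_{\mu \beta}\,\p_\delta L^\mu_{\alpha \gamma}$), and certifying that the non-symmetric remainders cancel through $R_{[\alpha \beta \gamma]}=0$ so that only the displayed curvature-plus-Cartan combinations survive, with the exact numerical coefficients $-\tfrac16$ and $\tfrac13$.
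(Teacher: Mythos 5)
Your proposal is correct and follows essentially the same route as the paper's proof: normal coordinates for the torsionless pullback connection $s^*\nabla$ of a section with $s(\bar x)=\bar v$ and $Ds(\bar x)=0$, the chain-rule collapse $\overset{s}{H}{}^\alpha_{\beta\gamma,\delta}(\bar x)=H^\alpha_{\beta\gamma,\delta}(\bar x,\bar v)$ combined with the Ingarden--Matsumoto curvature formula reducing $\overset{s}{R}$ to $R^{HH}$ to obtain \eqref{dpa}, and then differentiation of the (Landsberg-corrected) horizontal metric-compatibility identity at $(\bar x,\bar v)$, followed by symmetrization in $\gamma,\delta$ and contractions with $\bar v$. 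The only deviations are presentational rather than substantive (you obtain the Berwald first-order relation via $\Gamma=-L$ at the point instead of the Landsberg-definition identity, and you symmetrize directly using the last-pair antisymmetry of $R^{HH}$ where the paper also routes through the first-pair symmetry identity of the $HH$-curvature), so the two arguments coincide in all essentials.
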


\begin{proof}
By Prop.\ (\ref{poi})  we can find a local system of coordinates such that $H^\alpha_{\beta \gamma}(\bar x,\bar v)=0$ and $\p_0= \bar v$. The section $s\colon U\to TM\backslash 0$ given in components by $s^\mu=\delta^\mu_0$ is such that $D_\delta s^\mu=0$ and $s(\bar x)=\bar v$. Observe that these properties do not depend on the coordinate system.

Thus, let us forget of the coordinate system constructed so far and let $s\colon U\to TM\backslash 0$, $\bar{x}\in U$, be a local section such that $s(\bar{x})=\bar v$, $D_\delta s^\mu=0$. We consider the torsionless pullback connection $s^*\nabla$. It is well known \cite{petrov69,eisenhart97,gray04} that given a linear  torsionless connection local coordinates exist which accomplish $\overset{s}{H} {}^\alpha_{\beta \gamma}(\bar{x})=0$, $\p_0=\bar v$ and
\[
\overset{s}{H} {}^\alpha_{\beta \gamma,\delta}(\bar{x})=-\frac{1}{3}\, \big(\overset{s}{R} {}^\alpha_{\ \beta \gamma \delta}(\bar x) + \overset{s}{R} {}^\alpha_{ \ \gamma \beta \delta}(\bar x)\big).
\]
From Eq.\ (\ref{jjd}) we have $H^\alpha_{\beta \gamma}(\bar x,\bar{v})=0$, which implies
$\frac{\p s^\mu}{ \p x^\delta}(\bar{x})=D_\delta s^\mu (\bar{x})=0$ since the components of the non-linear connection vanish at $\bar x$.
Thus  taking into account that
\begin{align*}
\overset{s}{H}{}^\alpha_{\beta \gamma, \delta}(\bar x)&=H^\alpha_{\beta \gamma,\delta}(x,\bar v)+ \frac{\p H^\alpha_{\beta \gamma}}{\p v^\mu} (x,\bar{v}) \, \frac{\p s^\mu}{ \p x^\delta}(\bar{x})=H^\alpha_{\beta \gamma,\delta}(x,\bar v),
\end{align*}
and Eq.\ (\ref{dpo}) we arrive at Eq.\ (\ref{dpa}).

If $\nabla$ is the Berwald connection, $H^\alpha_{\beta \gamma}=G^\alpha_{\beta \gamma}$, we have by definition of Landsberg tensor
\[
g_{\alpha \beta,\gamma}-2N^\mu_\gamma C_{\alpha \beta \mu}-g_{\mu \beta} G^\mu_{\alpha \gamma}-g_{\alpha \mu} G^\mu_{\beta \gamma}=-2 L_{\alpha \beta \gamma}.
\]
Equation (\ref{kkkb}) is an immediate consequence of this equation.
Differentiating with respect to $x^\delta$ and evaluating at $(\bar x,\bar v)$ where several coefficients vanish, we obtain
\[
(g_{\alpha\beta,\gamma,\delta}-g_{\mu \beta} G^\mu_{\alpha \gamma,\delta}-g_{\alpha \mu} G^\mu_{\beta \gamma ,\delta}-2 N^\mu_{\gamma,
\delta} C_{\alpha \beta \mu}+2 L_{\alpha \beta \gamma,\delta})\vert_{(\bar x,\bar v)}=0.
\]
Subtracting this equation with that obtained exchanging $\gamma$ and $\delta$ we obtain a known symmetry of the Berwald HH-curvature \cite[Eq.\ (86)]{minguzzi14c} \[
R^{HH}{}_{\beta \alpha \delta \gamma}+R^{HH}{}_{\alpha \beta \delta \gamma}+2R^\mu_{\delta \gamma}C_{\alpha \beta \mu}=2 (\nabla^{HB}_\delta L_{\alpha \beta \gamma}-\nabla^{HB}_{
\gamma}L_{\alpha \beta \delta}).
\]
Symmetrizing we  get instead Eq.\ (\ref{metb}). Contracting with $\bar v^\beta$, using \cite[Eqs.\ (70),(86)]{minguzzi14c} and using the second identity of \cite[Sect.\ 5.4.1]{minguzzi14c}  we get Eq.\ (\ref{labb}). Equation (\ref{lagb}) follows easily upon contraction with $\bar v^\alpha$.

If $\nabla$ is the Chern-Rund connection, $H^\alpha_{\beta \gamma}=\Gamma^\alpha_{\beta \gamma}$ we have
\[
g_{\alpha \beta,\gamma}-2N^\mu_\gamma C_{\alpha \beta \mu}-g_{\mu \beta} \Gamma^\mu_{\alpha \gamma}-g_{\alpha \mu} \Gamma^\mu_{\beta \gamma}=0.
\]
Equation (\ref{kkk}) is an immediate consequence of this equation.
Differentiating with respect to $x^\delta$ and evaluating at $(\bar x,\bar v)$ where several coefficients vanish, we obtain
\[
(g_{\alpha\beta,\gamma,\delta}-g_{\mu \beta} \Gamma^\mu_{\alpha \gamma,\delta}-g_{\alpha \mu} \Gamma^\mu_{\beta \gamma ,\delta}-2 N^\mu_{\gamma,
\delta} C_{\alpha \beta \mu})\vert_{(\bar x,\bar v)}=0.
\]
Subtracting this equation with that obtained exchanging $\gamma$ and $\delta$ we obtain a known symmetry of the Chern-Rund HH-curvature \cite[Eq.\ (87)]{minguzzi14c}
\begin{equation}
R_{\alpha \beta \gamma \delta}=-R_{\beta \alpha \gamma \delta}-2 R^\mu_{\gamma \delta} C_{\alpha \beta \mu}.
\end{equation}
Using Eq.\ (\ref{dpa}) and this identity
 we arrive at Eq.\ (\ref{met}). Contracting with $\bar v^\beta$ and using the second identity of \cite[Sect.\ 5.4.1]{minguzzi14c}  we get Eq.\ (\ref{lab}). Contracting instead with $\bar v^\alpha \bar v^\beta$ and using again \cite[Eq.\ (87)]{minguzzi14c} we get Eq.\ (\ref{lag}).

\end{proof}

\subsection{The observer and its adapted coordinates}
Let us assume that $g$ has Lorentzian signature.
Let $x\colon [0,1]\to M$ be a $C^2$ future-directed timelike curve parametrized with respect to proper time, namely such that $g_{\dot{x}}(\dot x,\dot x)=-1$.  We wish to construct adapted coordinate systems analogous to  Fermi-Walker's \cite{manasse63,ni78,li79,marzlin94,nesterov99}.
The acceleration of the curve is
\begin{equation} \label{qxy}
a=\tilde{D}_{\dot{x}} \dot{x}.
\end{equation}
Let $e_0=\dot x$  and let $\{e_i(t),i=1,\cdots, n\}$ be a $C^1$ $g_{\dot{x}}$-orthonormal base of the space orthogonal to $\dot{x}(t)$,  namely $\ker g_{\dot{x}}(\dot x, \cdot)(t)$.  The acceleration is orthogonal to the velocity since $0=\tilde D_{\dot x} g_{\dot x }(\dot x,\dot x)=2g_{\dot{x}}(\dot{x},a)$, thus we can write $a=a^\alpha e_\alpha$ for some components $\{a^\alpha\}$ with $a^0=0$.
 We have
\begin{equation} \label{kok}
\tilde D_{\dot{x}} e_i=\Omega_{ji}(t) e_{j}+a_i \dot{x}
\end{equation}
where $\Omega$ is antisymmetric. Indeed,
\[
  0=\tilde{D}_{\dot{x}} g_{\dot{x}}(\dot{x}, e_i)=g_{\dot{x}}(a, e_i)+g_{\dot{x}}(\dot{x}, \tilde{D}_{\dot{x}} e_i) ,
\]
which proves that Eq.\ (\ref{kok}) holds for some matrix $\Omega$, and
\[
  0=\tilde{D}_{\dot{x}} g_{\dot{x}}(e_i,e_j)=g_{\dot{x}}(\tilde{D}_{\dot{x}} e_i, e_j)+g_{\dot{x}}(\tilde{D}_{\dot x} e_i, e_j)=\Omega_{ji}+\Omega_{ij} ,
\]
which proves that $\Omega$ is antisymmetric. It can be written $\Omega_{ij}=-\epsilon_{ijk} \omega^k$ where $\omega=\omega^k{e_k}$ is the angular velocity of the frame.
Let us introduce an antisymmetric tensor defined over the curve through $\Omega^{\alpha \beta} e_\beta \otimes e_{\beta}$ where $\Omega^{i j}=\Omega_{ij}$, $\Omega^{0\alpha}=-\Omega^{\alpha0}=a^\alpha$. Let us lower the indices with $g_{\dot x}$ then
\begin{equation}
\tilde{D}_{\dot{x}} e_{\alpha}=\Omega_{\ \alpha}^{\beta} \, e_{\beta} ,
\end{equation}
and\footnote{The sign in the definition of $\Omega$, opposite to that in \cite{misner73}, is chosen so as to make Eq.\ (\ref{qki}) reminiscent of the classical equation for apparent forces.}
\begin{equation}
\Omega_{\alpha \beta}=\dot x_\alpha a_\beta -a_\alpha \dot x_\beta -\varepsilon_{\gamma \delta \alpha \beta} \dot x^\gamma \omega^\delta ,
\end{equation}
where $\varepsilon_{\alpha \beta \gamma \delta} =\sqrt{\vert g_{\dot{x}}\vert} \, [{\alpha \beta \gamma \delta} ]$ is the volume form.

A local laboratory can be represented through the base $\{e_0,e_i\}$ where $a$ and $\omega$ are the  acceleration and angular velocity of the laboratory as measured through dynamometers and gyroscopes from inside the laboratory. It is understood that a gyroscope with direction $e(t)$, $g_{\dot x}(\dot x, e)=0$, satisfies
\begin{equation}
\tilde D_{\dot{x}} e=g_{\dot x}(a, e) \dot{x},
\end{equation}
(a better motivation would pass from the study of extended bodies regarded as unions of point particles).
It is convenient to introduce a (Fermi-Walker) time derivative with respect to the observer as follows
\begin{equation}
\tilde D_{{\dot x}}^{FW} X=\tilde D_{{\dot x}} X-\Omega(X)
\end{equation}
where $\Omega(X)=\Omega^\alpha_{\ \beta} X^\beta e_{\alpha}$.
We remark that as seen from the observer the time derivatives of the acceleration and angular velocity are $\tilde D_{{\dot x}}^{FW} a=\dot{a}^i e_i$, and $\tilde D_{{\dot x}}^{FW} \omega=\dot \omega^i e_i$. By linearity the Fermi-Walker derivative extends to tensors and  it is easy to check that the derivative of the endomorphism $\Omega$ is $\tilde D_{{\dot x}}^{FW} \Omega=\tilde D_{{\dot x}}\Omega$. This observation will be relevant in Eq.\ (\ref{qki}).


\begin{proposition} \label{see}
Let $x:I\to M$ be a timelike curve parametrized with respect to proper time and let $\{\dot{x}, e_i\}$ be an orthonormal frame over the curve.
Coordinate systems $\{x^0=t,x^i\}$ such that $\dot{x}=\p_t$ and $e_i=\p_i$, exist. Moreover, let $\nabla$ be a Finsler connection (compatible with the non-linear connection of the spray) which is $HH$-torsionless, that is, such that the horizontal coefficients are symmetric $H^\alpha_{\beta \gamma}=H^\alpha_{\gamma \beta}$ and $V^\alpha_{\beta \gamma}=0$ (for instance the Berwald or the Chern-Rund connection). We have:
\begin{itemize}
 \item[(a) ]For any such coordinate system we have on the curve $H_{0\alpha}^\beta=\Omega_{\ \alpha}^{\beta}$, that is:
 \begin{align}
H^0_{00}(x(t),\dot x(t))&=0, \\
H^i_{0 0}(x(t),\!\dot x(t))=H^0_{i 0}(x(t),\!\dot x(t))&=a_i(t), \label{mss}\\
 H^j_{i 0}(x(t),\dot x(t))&=\Omega_{ji}(t). \label{nss}
\end{align}
\item[(b)]Some of these coordinate systems are also such that all the other components of $H$, namely $H^\alpha_{ij}(x,\dot x)$,  vanish over the curve.
\end{itemize}
\end{proposition}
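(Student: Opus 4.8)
The plan is to establish the existence claim by an explicit adapted-frame construction, to prove the universal statement (a) by a one-line computation of the flipped derivative in any such chart, and to single out the special charts of (b) by running a Fermi-type construction on the pullback connection $s^*\nabla$. First I would construct the coordinates: working in a background chart identifying a neighborhood of the curve with an open set of $\mathbb{R}^{n+1}$, the map $\Phi(t,x^i)=x(t)+x^i e_i(t)$ (small $x^i$) has Jacobian at $x^i=0$ equal to the frame $\{\dot x(t),e_i(t)\}$, hence is a local diffeomorphism along the curve. Its inverse supplies coordinates $\{x^0=t,x^i\}$ with $\dot x=\p_t$ and $e_i=\p_i$ on the curve, which proves the existence claim and fixes the class of charts to which part (a) applies.

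For part (a), I would compute $\tilde D_{\dot x} e_\alpha$ in components in an arbitrary such chart. Since $e_\alpha=\p_\alpha$ on the curve, the components $(e_\alpha)^\beta=\delta^\beta_\alpha$ are constant in $t$ along the curve, so the derivative term in $\tilde D$ drops out; using $N^\beta_\mu(x,\dot x)=H^\beta_{\mu\nu}(x,\dot x)\dot x^\nu$ and $\dot x^\nu=\delta^\nu_0$ gives
\[
(\tilde D_{\dot x} e_\alpha)^\beta=H^\beta_{\alpha 0}(x,\dot x)=H^\beta_{0\alpha}(x,\dot x),
\]
the last equality by the $HH$-torsionless hypothesis. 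Comparing with $\tilde D_{\dot x} e_\alpha=\Omega^\beta_{\ \alpha}e_\beta$ from (\ref{kok}) yields $H^\beta_{0\alpha}(x,\dot x)=\Omega^\beta_{\ \alpha}$ on the curve; reading off $\Omega^0_{\ 0}=0$, $\Omega^i_{\ 0}=a^i=a_i$, $\Omega^0_{\ i}=a_i$, $\Omega^j_{\ i}=\Omega_{ji}$ produces exactly (\ref{mss}) and (\ref{nss}). Note this step uses only the symmetry of $H$, not the vanishing of $V$.

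For part (b), I would pass to the pullback connection $s^*\nabla$, where $s$ is any smooth section extending $\dot x$ off the curve, so that $s=\dot x$ on the curve. This is an ordinary torsionless affine connection, and by (\ref{jjd}) its coefficients on the curve equal $H^\alpha_{\beta\gamma}(x,\dot x)$, independently of the chosen extension. I would then run the classical Fermi construction for $s^*\nabla$: from each $x(t)$ shoot the $s^*\nabla$-geodesics in the spatial directions $e_i(t)$ and take $(t,y^i)$, with the point located at $y^i e_i(t)$, as coordinates. These maintain $\p_t=\dot x$, $\p_i=e_i$ on the curve, so part (a) continues to hold, while the spatial geodesics $t=\textrm{const}$, $y^i=\tau u^i$ being coordinate straight lines force the symmetric part $\overset{s}{H}{}^\alpha_{(ij)}$ to vanish and hence, by symmetry, $\overset{s}{H}{}^\alpha_{ij}=0$ on the curve. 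Translating back through (\ref{jjd}) gives $H^\alpha_{ij}(x,\dot x)=0$ on the curve.

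The hard part will be part (b): because the frame is accelerating and rotating one cannot annihilate all connection coefficients on the curve as in the parallel case, so care is needed to verify that the spatial-geodesic construction is compatible with the values $H^\beta_{0\alpha}=\Omega^\beta_{\ \alpha}$ forced by (a), and that the relation $\overset{s}{H}{}^\alpha_{ij}=H^\alpha_{ij}(x,\dot x)$ on the curve — which hinges on $V=0$ and on $s=\dot x$ along the curve — correctly transfers the vanishing from the affine pullback connection to the Finsler connection evaluated at $v=\dot x$.
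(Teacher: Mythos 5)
Your proof is correct, and for the existence claim and part (a) it follows essentially the paper's route: a tubular chart adapted to the frame (the paper builds it via a Riemannian exponential $f(t,{\bf x})=\exp^h_{x(t)}(x^i e_i)$, you via the equivalent affine map in a background chart -- note your version needs the curve segment to sit embedded in a single chart, a cosmetic point the exponential construction handles more cleanly), followed by reading (a) off from $\tilde D_{\dot x} e_\alpha=\Omega^\beta_{\ \alpha}e_\beta$ with $(e_\nu)^\mu=\delta^\mu_\nu$ and $N^\beta_\alpha(x,\dot x)=H^\beta_{\alpha 0}(x,\dot x)$. Where you genuinely diverge is part (b). The paper does \emph{not} invoke the pullback connection there: it introduces an auxiliary affine connection on the chart whose coefficients are frozen along the curve, $C^\alpha_{\beta\mu}(t,{\bf x})=H^\alpha_{\beta\mu}(x(t),\dot x(t))$, independent of ${\bf x}$, applies the exponential map of $C$ in the spatial directions to force $\tilde C^\alpha_{ij}=0$ on the curve, and then transfers this to $H$ by observing that $C^\alpha_{\beta\gamma}(x)$ and $H^\alpha_{\beta\gamma}(x,v)$ obey the same inhomogeneous transformation law and agree on the curve. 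You instead extend $\dot x$ to a section $s$ and run the Fermi construction for the torsionless affine connection $s^*\nabla$, using Eq.\ (\ref{jjd}) (hence $V^\alpha_{\beta\gamma}=0$) to pull $\overset{s}{H}{}^\alpha_{ij}=0$ back to $H^\alpha_{ij}(x(t),\dot x(t))=0$, independently of the extension. Both arguments are sound; yours is precisely the mechanism the paper itself deploys later, in the proof of Proposition \ref{dib}, so it has the merit of unifying (b) with its refinement and of making explicit which hypotheses ($HH$-torsionlessness and $V=0$) render $s^*\nabla$ torsionless with the correct coefficients on the curve, while the paper's $C$-trick buys a proof that needs no section at all. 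Your closing worry about compatibility with (a) is a non-issue: (a) was established for \emph{every} chart with $\p_0=\dot x$, $\p_i=e_i$ along the curve, and your Fermi chart is such a chart.
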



\begin{proof}
The first claim is obvious. The coordinates system could be constructed introducing a Riemannian metric and using the exponential map of this metric from the curve, $f(t,{\bf x})=\exp^h_{x(t)}(x^i e_i)$,  so as to construct the coordinate system in a tubular neighborhood of the curve.
Statement (a) follows immediately from Eq.\ (\ref{qxy}) and (\ref{kok}), using $(e_\nu)^\mu=\delta^\mu_\nu$.

For (b)  let us start from a coordinate system as in (a). Over the mentioned coordinate neighborhood of $x$ let us consider a connection, namely a spray whose coefficients $C^\alpha_{\beta  \mu}$ are independent of velocity. Let  the connection be defined by $C^\alpha_{\beta  \mu}(t,{\bf x})=H^\alpha_{\beta  \mu}(x(t),\dot x(t))$, hence independent of ${\bf x}$.

Through the exponential map determined by $C$, $x(t,{\bf x})=\exp^C_{x(t)}(\tilde x^i e_i)$, we can define coordinates $(t,\tilde {\bf x})$ on a neighborhood of the curve. In the new coordinate system the coefficients of the connection are denoted by $\tilde C^\alpha_{\beta  \mu}$ while those of the spray by $\tilde H^\alpha_{\beta  \mu}$. We have
\[
\tilde C^\alpha_{\beta \gamma}(x) =\frac{\p \tilde x^\alpha}{\p x^\sigma } \frac{\p x^\delta}{\p \tilde x^\beta} \frac{\p x^\mu}{\p \tilde x^\gamma} \, C^\sigma_{\delta \mu}(x) +\frac{\p^2  x^\sigma}{\p \tilde x^\beta  \p  \tilde x^\gamma }  \frac{\p \tilde x^\alpha}{\p  x^\sigma},
\]
and analogously with $C(x)$ replaced by $H(x,v)$.
Observe that the new coordinate system is still such that $\p_0=e_0$, $\p_i=e_i$, thus the connection coefficients $H^\beta_{0\alpha}$ mentioned in (a) remain the same. The geodesic $\sigma(s)$ issued from $x(t)$ with direction $n^i e_i$ has equation $\tilde{x}^i=n^i s$, $t=cnst$, thus the geodesic condition $\nabla^C_{\sigma'} \sigma'=0$ at $x(t)$ reads
$\tilde C^\alpha_{ij}(x(t)) n^i n^j=0$, which due to the arbitrariness of $n$ implies $\tilde C^\alpha_{ij}(x(t))=0$. Thus the coordinate change sends $C^\alpha_{\beta \gamma}$ to $\tilde C^\alpha_{\beta \gamma}$, where the latter is such that $\tilde C^\alpha_{ij}=0$ over the curve. But since $H^\alpha_{\beta \gamma}$ is sent to $\tilde H^\alpha_{\beta \gamma}$ via the same transformation rule, and $H^\alpha_{\beta \gamma}=C^\alpha_{\beta \gamma}$ on the curve we can conclude that $\tilde H^\alpha_{i j}(x(t),\dot x(t))=0$.
\end{proof}
%

Proposition \ref{see} can be improved as follows

\begin{proposition} \label{dib}
With the assumptions of Prop.\ (\ref{see}) there are coordinate systems such that the coefficients $H^\alpha_{\beta \gamma}$ vanish over the curve saved for
\begin{align}
H^i_{0 0}(x(t),\dot x(t))=H^0_{i 0}(x(t),\dot x(t))&=a_i(t), \label{ngs}\\
 H^j_{i 0}(x(t),\dot x(t))&=\Omega_{ji}(t), \label{bgs}
\end{align}
Among these coordinate systems  those obtained from the Fermi construction for the pullback connection are such that over the curve (i.e.\ at $(x(t),\dot x(t))$)
\begin{align}
H^0_{00,0}&=H^\alpha_{i j,0}=0 ,\\
H^i_{0 0,0}&=H^0_{i 0,0}=\dot a_i, \\
 H^j_{i 0,0}&=\dot \Omega_{ji},\\
 H^0_{i 0,j}&=-a_i(t) a_j(t)+R^{HH} {}^0_{\ i j 0}+R^{VH} {}^0_{\ i m 0} \Omega_{m j}-R^{VH} {}^0_{\ i m j} a_m, \label{tse}\\
 H^k_{i 0,j}&=\Omega_{j k}(t) a_i(t)+R^{HH} {}^k_{\ i j 0}+R^{VH} {}^k_{\ i m 0} \Omega_{m j}-R^{VH} {}^k_{\ i m j} a_m, \label{tsf}\\
H^0_{0 0,i}&=\dot a_i-\Omega_{ik}(t) a_k(t)+R^{VH} {}^0_{\ 0 m 0} \Omega_{ m i}-R^{VH} {}^0_{\ 0 m i} a_m ,\\
H^j_{0 0,i}&=\dot \Omega_{ji}+\Omega_{ik}(t)\Omega_{kj}(t)+a_i(t)a_j(t)+R^{HH} {}^j_{\ 0 i 0}\nonumber\\&\qquad  \qquad+R^{VH} {}^j_{\ 0 m 0} \Omega_{m i}-R^{VH} {}^j_{\ 0 m i} a_m ,\\
H^\alpha_{j k,l}&=-\frac{1}{3}\, \big(R^{HH} {}^\alpha_{\ j k l }  +R^{VH} {}^\alpha_{\ j m l} \Omega_{m k}  -R^{VH} {}^\alpha_{\ j m k} \Omega_{ m l} + k/j \big), \label{tsd}
\end{align}
where
$k/j$ means ``plus terms with  $k$ and $j$ exchanged''.  From here several other equations are easily obtained, for instance in the geodesic ($a^i=0$) parallel transport ($\Omega_{ij}=0$), Chern-Rund connection case we obtain
\begin{align}
g_{00,i,j}&=-R^{HH}{}_{0i0j}-R^{HH}{}_{0j0i},\\
g_{i0,j,k}&=R^{HH}{}_{ijk0}-\frac{1}{3} (R^{HH}{}_{0ijk}+R^{HH}{}_{0jik}),\\
g_{ij,k,l}&=-\frac{1}{3}\, \big(R^{HH}{}_{ijkl}+R^{HH}{}_{ikjl}+R^{HH}{}_{jikl}+R^{HH}{}_{jkil}\big) \nonumber\\
\qquad &+C^m_{ij}\big(R_{lmk}+R_{kml} +2R^s_m C_{skl}-2R^s_k C_{slm}-2R^s_l C_{skm} \big) ,
\end{align}
while all the other derivatives with respect to position of first and second order vanish.

\end{proposition}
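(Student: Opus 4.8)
The plan is to reduce the whole statement to the classical Fermi construction applied to the ordinary torsionless linear connection $\overset{s}{\nabla}$, and then to translate the resulting affine formulas back into the Finsler coefficients through (\ref{jjd}) and the pullback-curvature identity (\ref{dpo}). First I would start from a chart as in Prop.\ (\ref{see}), so that on the curve $\p_\alpha=e_\alpha$ and, by statement (a), $H^\beta_{0\alpha}(x(t),\dot x(t))=\Omega^\beta_{\ \alpha}(t)$ while all remaining coefficients vanish; this already yields (\ref{ngs})--(\ref{bgs}) and the curve-values of $H$. I would then take the section $s$ with components $s^\mu=\delta^\mu_0$ in this chart, so that $s|_{\rm curve}=\dot x$ and, on the curve, $D_\delta s^\mu=N^\mu_\delta(x,\dot x)=H^\mu_{\delta 0}(x,\dot x)=\Omega^\mu_{\ \delta}$, the last equality being again statement (a). Since $\nabla$ is $HH$-torsionless with vanishing vertical coefficients, $\overset{s}{\nabla}=s^*\nabla$ is an ordinary torsionless connection with $\overset{s}{H}{}^\alpha_{\beta\gamma}(x)=H^\alpha_{\beta\gamma}(x,s(x))$.

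Next I would invoke the classical Fermi construction for an affine connection along a curve \cite{manasse63,ni78,li79,marzlin94,nesterov99} (reducing, at each point of the worldline, to the normal-coordinate construction \cite{petrov69,eisenhart97,gray04}). It produces coordinates in which $\p_\alpha=e_\alpha$ on the curve, $\overset{s}{H}{}^\beta_{0\alpha}=\Omega^\beta_{\ \alpha}$ there, and in which the first derivatives $\overset{s}{H}{}^\alpha_{\beta\gamma,\delta}$ on the curve are the standard Fermi expressions in terms of $\overset{s}{R}$ together with the purely kinematical data $a$, $\Omega$, $\dot a$, $\dot\Omega$. The kinematical contributions (the diagonal $\dot a_i$, $\dot\Omega_{ji}$ coming from differentiating the curve-values along $e_0$, and the products $-a_ia_j$, $a_ia_j$, $\Omega_{ik}\Omega_{kj}$, $-\Omega_{ik}a_k$ coming from the non-vanishing $\Omega^\beta_{\ \alpha}$ on the worldline) are exactly those appearing on the right-hand sides of the listed equations, while the purely spatial block obeys the normal-coordinate law $\overset{s}{H}{}^\alpha_{jk,l}=-\tfrac13(\overset{s}{R}{}^\alpha_{\ jkl}+\overset{s}{R}{}^\alpha_{\ kjl})$ inherited from (\ref{dpa}).

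To transfer these derivatives to the Finsler coefficients I would first note that the section stays coordinate-constant on the curve in the new chart: since $\p_\delta=e_\delta$ on the curve in both charts and the frame-intrinsic derivative $D_{e_\delta}s=\Omega^\mu_{\ \delta}e_\mu$ is the same geometric object, one has $\p_\delta s^\mu=D_\delta s^\mu-N^\mu_\delta=\Omega^\mu_{\ \delta}-\Omega^\mu_{\ \delta}=0$ on the curve. Hence, exactly as in the proof of Prop.\ (\ref{wfy}), the correction $\tfrac{\p H^\alpha_{\beta\gamma}}{\p v^\mu}\p_\delta s^\mu$ vanishes and $H^\alpha_{\beta\gamma,\delta}(x,\dot x)=\overset{s}{H}{}^\alpha_{\beta\gamma,\delta}$ on the curve. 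Substituting (\ref{dpo}) with $D_\gamma s^\mu=\Omega^\mu_{\ \gamma}$ and $D_\delta s^\mu=\Omega^\mu_{\ \delta}$ then converts every $\overset{s}{R}$ into $R^{HH}$ plus terms of the form $R^{VH}{}^\alpha_{\ \beta\mu\delta}\Omega^\mu_{\ \gamma}$; using $\Omega^m_{\ 0}=a_m$ these are precisely the $R^{VH}\Omega$ and $R^{VH}a$ contributions of (\ref{tse})--(\ref{tsd}).

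The main obstacle is the bookkeeping in this last substitution. The connection data $\Omega^\mu_{\ \delta}=D_\delta s^\mu$ carries a radial component along $v=e_0$ (namely $\Omega^0_{\ i}=a_i$), and a naive substitution would generate spurious terms such as $R^{VH}{}^\alpha_{\ \beta 0\delta}a_\gamma$ absent from the statement. These are killed by the homogeneity identity $R^{VH}{}^\alpha_{\ \beta\mu\delta}v^\mu=0$ (for the Berwald connection this reads $G^\alpha_{\ \beta\delta\mu}v^\mu=0$, i.e.\ Euler's theorem applied to the degree-zero quantity $G^\alpha_{\ \beta\delta}$), so that only the spatial contractions survive and match the listed formulas precisely; checking this cancellation carefully is the delicate point of the argument. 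Once (\ref{tse})--(\ref{tsd}) are established, the concluding metric expansions in the geodesic ($a^i=0$), non-rotating ($\Omega_{ij}=0$), Chern-Rund case follow by specialization and by reading off $g_{\alpha\beta,\gamma,\delta}$ from $\overset{s}{H}=\Gamma$ through the same metric-compatibility computation already used in the proof of Prop.\ (\ref{wfy}).
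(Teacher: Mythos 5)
Your proposal follows the paper's proof essentially step for step: the same section $s$ with $s=\dot x$ and $D_\delta s^\mu=\Omega^\mu_{\ \delta}$ along the curve, Fermi coordinates for the torsionless pullback connection $s^*\nabla$, the key cancellation $\p_\delta s^\mu=D_\delta s^\mu-\overset{s}{H}{}^\mu_{0\delta}=0$ on the curve so that $H^\alpha_{\beta\gamma,\delta}(x(t),\dot x(t))=\overset{s}{H}{}^\alpha_{\beta\gamma,\delta}(x(t))$, and the conversion of $\overset{s}{R}$ into $R^{HH}$ and $R^{VH}$ terms via (\ref{dpo}) with $R^{VH}{}^\alpha_{\ \beta\mu\delta}v^\mu=0$ suppressing the radial contractions, which is precisely the identity the paper invokes from the literature. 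The only cosmetic difference is that you cite the classical Fermi-coordinate formulas for the first derivatives of the connection coefficients, whereas the paper rederives them directly from the coordinate expression of the curvature together with the on-curve values and the cyclic identity furnished by the exponential chart; this does not change the substance of the argument.
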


Once a Finsler connection has been chosen some simplifications are possible in Eq.\ (\ref{tse})-(\ref{tsd}). For instance, in the Berwald case $R^{VH}$ is the Berwald curvature, $R^{VH} {}^\alpha_{\ \beta \gamma \delta}=G^\alpha_{\ \beta \gamma \delta}$, which vanishes whenever one of the lower indices is zero, while $-\frac{1}{2}R^{VH} {}^0_{\ \alpha \beta \gamma}$ is the Landsberg tensor \cite{minguzzi14c}.

\begin{proof}
By Prop.\ \ref{see} there is a coordinate system such that $\dot{x}=\p_0$, $e_i=\p_i$, and (a) holds. Thus the section given by $s^\mu=\delta^{\mu}_0$ is such that
\begin{align*}
D s&=N^\mu_\alpha(x, \dot{x})e_{\mu}\otimes \dd x^\alpha=H^\mu_{0 \alpha}(x, \dot{x}) e_\mu\otimes \dd x^\alpha\\
&=a_i (e_i\otimes \dd t+\dot{x} \otimes \dd x^i)+\Omega_{j i} (e_j\otimes \dd x^i).
\end{align*}

Let $s\colon U \to M$ be a section defined in a neighborhood $U$ of the curve such that the previous identity holds on the curve. We consider the  pullback connection $s^*\nabla$ where $\nabla$ is a HH-torsionless Finsler connection such that $V^\alpha_{\beta \gamma}=0$. As a consequence $s^*\nabla$ is torsionless and Eqs.\ (\ref{jjd}) and (\ref{dpo}) hold. We know that this connection satisfies (a) of Prop.\ (\ref{see}) in a coordinate system for which $\dot{x}=\p_0$, $e_i=\p_i$ thus in the same coordinate system the pullback connection is such that
\begin{align}
\overset{s}{H}{}^0_{00}(x(t))&=0, \label{b0} \\
\overset{s}{H}{}^i_{0 0}(x(t))=\overset{s}{H}{}^0_{i 0}(x(t))&=a_i(t), \label{b1}\\
 \overset{s}{H}{}^j_{i 0}(x(t))&=\Omega_{ji}(t). \label{b2}
\end{align}
Now, of all the coordinate systems for which $\dot{x}=\p_0$, $e_i=\p_i$, we choose the one for which the coordinates are constructed with the Fermi prescription, namely through the exponential map $f(t,{\bf x})=\exp^s_{x(t)} (x^i e_i)$. Since the curves of equation $x^i=n^i u$ are geodesics for the pullback connection we have $\overset{s}{H}{}^\alpha_{i j}((t,nu)) n^i n^j=0$, which by the arbitrariness of $n$ gives (for the latter differentiate first with respect to $u$ and then set $u=0$)
\begin{align*}
\overset{s}{H}{}^\alpha_{i j}(x(t))&=0,\\
\overset{s}{H}{}^\alpha_{i j,k}(x(t))+\overset{s}{H}{}^\alpha_{k i ,j}(x(t))+\overset{s}{H}{}^\alpha_{j k,i}(x(t))&=0 .
\end{align*}
Differentiating (\ref{b0})-(\ref{b2}) we also get
\begin{align}
\overset{s}{H}{}^0_{00,0}(x(t))=\overset{s}{H}{}^\alpha_{i j,0}(x(t))&=0 ,\\
\overset{s}{H}{}^i_{0 0,0}(x(t))=\overset{s}{H}{}^0_{i 0,0}(x(t))&=\dot a_i(t), \\
 \overset{s}{H}{}^j_{i 0,0}(x(t))&=\dot \Omega_{ji}(t).
\end{align}
From here using
\[
\overset{s}{R}{}^{\alpha}_{\ \beta \gamma \delta}=\overset{s}{H}{}^\alpha_{\beta \delta,\gamma}-\overset{s}{H}{}^\alpha_{\beta \gamma,\delta}+\overset{s}{H}{}^\alpha_{\mu \gamma} \overset{s}{H}{}^\mu_{\beta \delta}-\overset{s}{H}{}^\alpha_{\mu \delta}\overset{s}{H}{}^\mu_{\beta \gamma} ,
\]
we obtain
\begin{align}
\overset{s}{H} {}^0_{0 i,j}(x(t))&=-a_i(t) a_j(t)+\overset{s}{R} {}^0_{\ i j 0}(x(t))  , \label{yhb} \\
\overset{s}{H} {}^k_{i 0,j}(x(t))&=\Omega_{jk}(t) a_i(t)+\overset{s}{R} {}^k_{\ i j 0}(x(t)) ,\label{yhc} \\
\overset{s}{H} {}^0_{0 0,i}(x(t))&=\dot a_i(t)-\Omega_{ik}(t) a_k(t)+\overset{s}{R} {}^0_{\ 0 i  0}(x(t)) ,\\
\overset{s}{H} {}^j_{0 0,i}(x(t))&=\dot \Omega_{ji}(t)+\Omega_{ik}(t)\Omega_{kj}(t)+a_i(t)a_j(t)+\overset{s}{R} {}^j_{\ 0 i 0}(x(t)) ,\\
\overset{s}{H} {}^\alpha_{j k,l}(x(t))&=-\frac{1}{3}\, \big(\overset{s}{R} {}^\alpha_{\ j k l}(x(t)) + \overset{s}{R} {}^\alpha_{ \ k j l}(x(t))\big). \label{ygb}
\end{align}
We use
\begin{align*}
\overset{s}{H}{}^\alpha_{\beta \gamma, \delta}(x(t))&=H^\alpha_{\beta \gamma,\delta}(x(t),\dot x(t))+ \frac{\p H^\alpha_{\beta \gamma}}{\p v^\mu} (x(t),\dot x(t)) \, \frac{\p s^\mu}{ \p x^\delta}(\bar{x})\\
&=H^\alpha_{\beta \gamma,\delta}(x(t),\dot x(t))+ \frac{\p H^\alpha_{\beta \gamma}}{\p v^\mu} (x(t),\dot x(t)) \, [D_\delta s^\mu(x(t)) - H^\mu_{0\delta}(x(t),\dot x(t))]\\
&=H^\alpha_{\beta \gamma,\delta}(x(t),\dot x(t))
\end{align*}
 where in the last step we observed that  $H^\mu_{0\delta}(x(t),\dot x(t))=\overset{s}{H} {}^\mu_{0\delta}(x(t))$  and used Eqs.\ (\ref{b1}) and (\ref{b2}).

Now we recall that in general from \cite[Sect.\ 5.2.2]{minguzzi14c} $R^{VH}{}^\alpha_{\ \beta \gamma \delta}(x,v) v^\gamma=0$,  and use Eq.\ (\ref{dpo}) to get
\begin{align*}
\overset{s}{R} {}^\alpha_{\ i j 0} &=R^{HH} {}^\alpha_{\ i j 0}+R^{VH} {}^\alpha_{\ i m 0} \Omega_{mj}-R^{VH} {}^\alpha_{\ i m j} a_m,\\
\overset{s}{R} {}^0_{\ 0 i 0} &=R^{HH} {}^0_{\ 0 i 0}+R^{VH} {}^0_{\ 0 m 0} \Omega_{ mi}-R^{VH} {}^0_{\ 0 m i} a_m, \\
\overset{s}{R} {}^j_{\ 0 i 0} &=R^{HH} {}^j_{\ 0 i 0}+R^{VH} {}^j_{\ 0 m 0} \Omega_{ mi}-R^{VH} {}^j_{\ 0 m i} a_m, \\
\overset{s}{R} {}^\alpha_{\ j k l} &=R^{HH} {}^\alpha_{\ j k l }+R^{VH} {}^\alpha_{\ j m l} \Omega_{ mk}-R^{VH} {}^\alpha_{\ j m k} \Omega_{ ml}.
\end{align*}
Observe that we can further use $R^{HH} {}^0_{0 \alpha \beta}=R^0_{\alpha \beta}=0$ in the second equation (contract the second equation in display in \cite[Sect.\ 5.4.1]{minguzzi14c} with $y^i y^j$). From these equations the thesis follows easily upon substitution in Eqs.\ (\ref{yhb})-(\ref{ygb}).
\end{proof}

\section{The free particle seen by the observer}

Let us consider a timelike geodesic $y(t)$ and a timelike curve $x(t)$ both parametrized with respect to proper time. Let us suppose that $y(t)$ remains close to $x(t)$ in both position and velocity for some proper time interval. Let $e_i(t)$ be a frame orthogonal to $\dot x(t)$. We can use the special coordinate system $\{x^\alpha\}$ constructed in the previous section to express the geodesic equation
\[
\ddot y^\alpha+2G^\alpha(y,\dot y)=0,
\]
where $y^\alpha(t)=x^\alpha(y(t))$. We can Taylor expand the second term at $(x(t), \dot x(t))=((t,{\bf 0}),(1,{\bf 0}))$ setting $\xi= y-x$. We retain only the linear terms since terms of  higher order are unlikely to be observable
\begin{align*}
2G^\alpha(y,\dot y)&=2G^\alpha(x,\dot x)+2G^\alpha_\beta(x,\dot x) \dot \xi^\beta+2 G^\alpha_{,\beta}(x,\dot x) \xi^\beta +\cdots \\
&= G^\alpha_{0 0}(x,\dot x)+2G^\alpha_{\beta 0}(x,\dot x) \dot \xi^\beta+  G^\alpha_{00,\beta}(x,\dot x) \xi^\beta  +\cdots\\
&= H^\alpha_{0 0}(x,\dot x)+2H^\alpha_{\beta 0}(x,\dot x) \dot \xi^\beta+  H^\alpha_{00,\beta}(x,\dot x) \xi^\beta  +\cdots
\end{align*}
In the last line we have simply observed that the second line can be expressed entirely in terms of the components of the non-linear connection $N^\alpha_\beta=G^\alpha_{\beta 0}$ and that they can be expressed through the horizontal coefficients of any notable Finsler connection:   $N^\alpha_\beta=H^\alpha_{\beta 0}$.

The geodesic equation becomes a system
\begin{align}
0&=\frac{\dd^2 \xi^0}{\dd t^2}+2 a_i \dot \xi^i+\left( \dot a_i-\Omega_{ik} a_k +R^{VH} {}^0_{\ 0m0} \Omega_{mi}-R^{VH} {}^0_{\ 0mi} a_m\right)\xi^i,\\
0&=\frac{\dd^2 \xi^i}{\dd t^2}+a^i+2\Omega_{ij} \dot \xi^j+ R^i_j \xi^j+\big( \dot{\Omega}_{ij}+\Omega_{jk} \Omega_{ki}+a_i a_j \nonumber \\ & \qquad  + R^{VH} {}^i_{\ 0m0} \Omega_{mj}-R^{VH} {}^i_{ \ 0mj} a_m \big) \xi^j+\dot{a}^i \xi^0+2a^i \dot \xi^0 . \label{lzz}
\end{align}
Some comments are in order. The parameter $t$ appearing in these equations is the proper time parametrization of the curves and over $y(t)$ should be distinguished from the coordinate $x^0(y(t))=y^0(t)$ (recall that $\xi^0=y^0(t)-t$, hence $\dot y^0=1+\dot \xi^0$). If one is not really interested in the proper time parametrization of the geodesic but just on its spaceetime trajectory then it is natural to parametrize it with the local time foliation constructed by the observer, namely one can make a change of parameter in the second equation recasting it as a differential equation for $\xi(t(y^0))$. Observe that with this purpose in mind one could make at any chosen instant a change affine parameter over $y$, $t \to at+b$, so as to obtain $\xi^0=0$, $\dot \xi^0=0$. This operation makes the parametrizations of affine parameter and that of the foliation locally coincident, the error being of higher order than linear. This operation clearly removes the last relativistic terms of the second equation at least for some time interval.

One can proceed  in a different way using the Fermi-Walker derivative. It must be recalled that (\ref{lzz})  depends on the adapted coordinate system, which, as mentioned, depends on the chosen Finsler connection. Indeed,
as the construction of the coordinate system makes $H^\alpha_{ij}$ vanish, the coordinate system is different depending on whether $H^\alpha_{\beta \gamma}$ are the horizontal coefficients of the Berwald connection or of the Chern-Rund connection. In the former case  $R^{VH} {}^\alpha_{\ 0 \beta \gamma}=0$ while in the latter case $R^{VH} {}^\alpha_{\ 0 \beta \gamma}=L^\alpha_{\beta \gamma}$, which is the Landsberg tensor \cite[Sect.\ 5.3.2]{minguzzi14c}. We rewrite (\ref{lzz}) in the Chern-Rund case as the space part of
\begin{equation} \label{qki}
\tilde D_{\dot x}^{FW} \tilde D_{\dot x}^{FW} \xi=\underbrace{-R(\xi,\dot x)}_\text{tidal}\!\!\!\underbrace{-a}_\text{translat.}\!\!\underbrace{-(\tilde D_{\dot x}^{FW} \Omega)(\xi)}_\text{azimuthal}\underbrace{-\Omega(\Omega(\xi))}_\text{centrifugal}\underbrace{-2\Omega(\tilde D_{\dot x}^{FW} \xi)}_\text{Coriolis}\underbrace{+L(a,\xi)}_\text{Finslerian}
\end{equation}
where the first tidal term involves the non-linear curvature. In the Berwald case the last term does not appear, thus this Finslerian term depends on how we extend the coordinate frame.

We studied the geodesic equation in the non-inertial frame precisely because otherwise no Finslerian term appears at the linear order. In any case the weak equivalence principle is satisfied as previously discussed: slow particles with respect to the free falling observer move approximately over straight lines. It can be mentioned that in the Lorentzian case there appeared studies of the deviation equation which drop the condition on the slowness of the particle \cite{perlick08}.

For dimensional reasons $L$ should be an inverse length, thus the Finslerian contribution could be observed only if this length is not too large compared with the distance among the curves.
Elsewhere \cite{minguzzi14c} I have suggested that our spacetime could be  Landsbergian ($L=0$)  in which case the additional term would vanish even in the Chern-Rund case.

\section{Conclusions}

We have shown that special local coordinate systems can be constructed which have several properties in common with normal or Fermi(-Walker) coordinates. They simplify considerably the expression of the horizontal connection coefficients allowing their expression in terms of curvature invariants. In short our strategy applied the usual normal or Fermi coordinates construction to a pullback connection obtained from a suitable section $s\colon M\to TM$. Then we used some results by Ingarden and Matsumoto in order to relate the curvature of the pullback connection with the $HH$-curvature of the original Finsler connection. In the introduction we argued that for some choices of  section the Douglas-Thomas normal coordinates are recovered, and so the found expressions for the derivatives of the metric or for the connection coefficients hold for these coordinate systems as well.

Although the section  $s$ used in the construction privileges some vector, $s(\bar x)=\bar v$, the whole procedure is quite natural particularly for Fermi coordinates since there we have already a privileged vector given by the tangent of the curve. 

We also applied these findings to the study of the equivalence principle.
We have been able to write the geodesic equation for neighboring free falling particles in adapted coordinates, and in fact to separate the contributions from various terms, see Eq.\ (\ref{qki}), identifying one term of Finslerian origin. This term is related to the Landsberg tensor and is obtained whenever the local coordinate system is constructed so as to make the Chern-Rund connection coefficients vanish (as far as possible). It turns out that free particles would appear  as moving uniformly on straight lines, at least approximately, provided they move slowly with respect to the observer. On the contrary, in Lorentzian theories a free falling particle does not have coordinate acceleration at the location of the observer irrespective of the magnitude of its velocity.

\section*{Acknowledgments} Work partially supported by GNFM of INDAM.


\begin{thebibliography}{10}

\bibitem{abate94}
M.~Abate and G.~Patrizio.
\newblock {\em Finsler metrics - A global approach}.
\newblock Lecture notes in Mathematics 1591. {Springer-Verlag}, Berlin, 1994.

\bibitem{antonelli93}
P.~L. Antonelli, R.~S. Ingarden, and M.~Matsumoto.
\newblock {\em The Theory of Sprays and {F}insler Spaces with Applications in
  Physics and Biology}.
\newblock Springer Science+Business Media, Dordrecht, 1993.

\bibitem{bao00}
D.~Bao, S.-S. Chern, and Z.~Shen.
\newblock {\em An Introduction to {R}iemann-{F}insler Geometry}.
\newblock {Springer-Verlag}, New York, 2000.

\bibitem{beem70}
J.~K. Beem.
\newblock Indefinite {F}insler spaces and timelike spaces.
\newblock {\em Can. J. Math.}, 22:1035--1039, 1970.

\bibitem{bejancu99}
A.~Bejancu and H.~R. Farran.
\newblock On the vertical bundle of a pseudo-{F}insler manifold.
\newblock {\em Int. J. Math. Math. Sci.}, 22(3):637--642, 1999.

\bibitem{bogoslovsky98}
G.~Yu. Bogoslovsky and H.~F. Goenner.
\newblock On a possibility of phase transitions in the geometric structure of
  space-time.
\newblock {\em Physics {L}etters {A}}, 244:222--228, 1998.

\bibitem{bryant02}
R.~L. Bryant.
\newblock Some remarks on {F}insler manifolds with constant flag curvature.
\newblock {\em Houston J. Math.}, 28(2):221--262, 2002.
\newblock Special issue for S. S. Chern.

\bibitem{bucataru10}
I.~Bucataru and M.~F. Dahl.
\newblock A complete lift for semisprays.
\newblock {\em Int. J. Geom. Methods Mod. Phys.}, 7(2):267--287, 2010.

\bibitem{bucataru12}
I.~Bucataru and Z.~Muzsnay.
\newblock Projective and {F}insler metrizability: parameterization-rigidity of
  the geodesics.
\newblock {\em Internat. J. Math.}, 23(9):1250099, 15, 2012.

\bibitem{busemann55}
H.~Busemann.
\newblock On normal coordinates in {F}insler spaces.
\newblock {\em Math. Ann.}, 129:417--423, 1955.

\bibitem{chang13}
Z.~Chang, M.-H. Li, and S.~Wang.
\newblock Finsler geometric perspective on the bulk flow in the universe.
\newblock {\em Physics Letters B}, 723:257--260, 2013.

\bibitem{douglas27}
J.~Douglas.
\newblock The general geometry of paths.
\newblock {\em Annals of Mathematics, Second Series}, 29:143--168, 1927 - 1928.

\bibitem{eisenhart97}
L.~P. Eisenhart.
\newblock {\em Riemannian geometry}.
\newblock Princeton Landmarks in Mathematics. Princeton University Press,
  Princeton, NJ.

\bibitem{gibbons07}
G.~W. Gibbons, J.~Gomis, and C.~N. Pope.
\newblock General {V}ery {S}pecial {R}elativity is {F}insler geometry.
\newblock {\em Phys. Rev. D}, 76:081701, 2007.

\bibitem{gray04}
A.~Gray.
\newblock {\em Tubes}, volume 221 of {\em Progress in Mathematics}.
\newblock Birkh\"auser Verlag, Basel, second edition, 2004.

\bibitem{ingarden93}
R.~S. Ingarden and M.~Matsumoto.
\newblock On the 1953 {B}arthel connection of a {F}insler space and its
  mathematical and physical interpretation.
\newblock {\em Rep. Math. Phys.}, 32:35--48, 1993.

\bibitem{lammerzahl12}
C.~L{\"a}mmerzahl, V.~Perlick, and W.~Hasse.
\newblock Observable effects in a class of spherically symmetric static
  {F}insler spacetimes.
\newblock {\em Phys. Rev. D}, 86:104042, 2012.

\bibitem{li79}
Wann-Quan Li and Wei-Tou Ni.
\newblock Coupled inertial and gravitational effects in the proper reference
  frame of an accelerated, rotating observer.
\newblock {\em J. Math. Phys.}, 20:1473--1480, 1979.

\bibitem{manasse63}
F.~K. Manasse and C.~W. Misner.
\newblock Fermi normal coordinates and some basic concepts in differential
  geometry.
\newblock {\em J. Math. Phys.}, 4:735--745, 1963.

\bibitem{marzlin94}
K.~P. Marzlin.
\newblock Fermi coordinates for weak gravitational fields.
\newblock {\em Phys. Rev. D}, 50:888--891, 1994.

\bibitem{matveev09b}
V.~S. Matveev, H.-B. Rademacher, M.~Troyanov, and A.~Zeghib.
\newblock Finsler conformal {L}ichnerowicz-{O}bata conjecture.
\newblock {\em Ann. Inst. Fourier (Grenoble)}, 59(3):937--949, 2009.

\bibitem{minguzzi14c}
E.~Minguzzi.
\newblock The connections of pseudo-{F}insler spaces.
\newblock {\em Int. J. Geom. Meth. Mod. Phys.}, 11:1460025, 2014.
\newblock Erratum ibid 12 (2015) 1592001. {arXiv}:1405.0645.

\bibitem{minguzzi13d}
E.~Minguzzi.
\newblock Convex neighborhoods for {L}ipschitz connections and sprays.
\newblock {\em Monatsh. Math.}, 177:569--625, 2015.
\newblock {arXiv}:1308.6675.

\bibitem{minguzzi13c}
E.~Minguzzi.
\newblock Light cones in {F}insler spacetime.
\newblock {\em Commun. Math. Phys.}, 334:1529--1551, 2015.
\newblock {{arXiv}:}1403.7060.

\bibitem{misner73}
C.~W. Misner, K.~S. Thorne, and J.~A. Wheeler.
\newblock {\em Gravitation}.
\newblock Freeman, San Francisco, 1973.

\bibitem{nesterov99}
A.~I. Nesterov.
\newblock Riemann normal coordinates, {F}ermi reference system and the geodesic
  deviation equation.
\newblock {\em Class. Quantum Grav.}, 16:465--467, 1999.

\bibitem{ni78}
Wei-Tou Ni and M.~Zimmermann.
\newblock Inertial and gravitational effects in the proper reference frame of
  an accelerated, rotating observer.
\newblock {\em Phys. Rev. D}, 17:1473--1476, 1978.

\bibitem{perlick08}
V.~Perlick.
\newblock On the generalized {J}acobi equation.
\newblock {\em Gen. Relativ. Gravit.}, 40:1029–--1045, 2008.

\bibitem{petrov69}
A.~Z. Petrov.
\newblock {\em Einstein spaces}.
\newblock Translated from the Russian by R. F. Kelleher. Translation edited by
  J. Woodrow. Pergamon Press, Oxford-Edinburgh-New York, 1969.

\bibitem{pfeifer14}
C.~Pfeifer.
\newblock The tangent bundle exponential map and locally autoparallel
  coordinates for general connections with application to {F}inslerian
  geometries.
\newblock {\em Int. J. Geom. Meth. Mod. Phys}, 2015.
\newblock DOI:10.1142/S0219887816500237, arXiv:1406.5413.

\bibitem{pfeifer12}
C.~Pfeifer and M.~N.~R. Wohlfarth.
\newblock Finsler geometric extension of {E}instein gravity.
\newblock {\em Phys. Rev. D}, 85:064009, 2012.

\bibitem{sabau10}
S.~V. Sabau, K.~Shibuya, and H.~Shimada.
\newblock On the existence of generalized unicorns on surfaces.
\newblock {\em Differential Geom. Appl.}, 28(4):406--435, 2010.

\bibitem{shen01b}
Z.~Shen.
\newblock {\em Differential geometry of spray and {F}insler spaces}.
\newblock Kluwer Academic Publishers, Dordrecht, 2001.

\bibitem{shen01}
Z.~Shen.
\newblock {\em Lectures on {F}insler geometry}.
\newblock World {S}cientific, Singapore, 2001.

\bibitem{szilasi14}
J.~Szilasi, R.~L. Lovas, and D.~Cs. Kertesz.
\newblock {\em Connections, sprays and Finsler structures}.
\newblock World {S}cientific, London, 2014.

\bibitem{veblen23}
O.~Veblen and T.~Y. Thomas.
\newblock The geometry of paths.
\newblock {\em Trans. Amer. Math. Soc.}, 25:551--608, 1923.

\bibitem{whitehead32}
J.~H.~C. Whitehead.
\newblock Convex regions in the geometry of paths.
\newblock {\em Quart. {J}. {M}ath. {O}xford {S}er.}, 3:33--42, 1932.

\bibitem{whitehead33}
J.~H.~C. Whitehead.
\newblock Convex regions in the geometry of paths - {A}ddendum.
\newblock {\em Quart. {J}. {M}ath. {O}xford {S}er.}, 4:226--227, 1933.

\end{thebibliography}
\end{document}